\documentclass[11pt]{article}
\usepackage[utf8]{inputenc}
\usepackage{amsfonts,epsfig}
\usepackage[hyphens]{url}
\RequirePackage{color}
\usepackage{microtype}
\usepackage{mathtools}
\usepackage{enumerate}
\usepackage{tabularx}
\usepackage{pifont} 
\usepackage{xcolor}
\definecolor{darkblue}{rgb}{0,0.08,0.45}
\definecolor{darkred}{RGB}{139,0,0}
\definecolor{Darkblue}{RGB}{0,0,139}
\definecolor{forestgreen}{RGB}{34,139,34}
\definecolor{darkgreen}{RGB}{0,100,0}
\usepackage{tikz}
\usepackage{bbm}
\usepackage{algpseudocode}
\usepackage{algorithm}
\usepackage{stackengine}
\usetikzlibrary{shapes,arrows,chains}
\usetikzlibrary[calc]
\usetikzlibrary{bayesnet}
\tikzset{>=latex}
\usepackage{hyperref}
\hypersetup{
	colorlinks=true,
	linkcolor=black,
	citecolor=black,
	filecolor=black,
	urlcolor=black,
}

\usepackage{breakurl}
\usepackage{comment}
\graphicspath{{fig/}}
\linespread{1.02}
\usepackage{amsthm}
\newtheorem{theorem}{Theorem}

\usepackage{fancyvrb}
\let\oldv\verbatim
\let\oldendv\endverbatim
\def\verbatim{\par\setbox0\vbox\bgroup\oldv}
\def\endverbatim{\oldendv\egroup\fboxsep0pt \noindent\colorbox[gray]{0.96}{\usebox0}\par}

\usepackage{amssymb}
\usepackage{geometry}
\geometry{letterpaper, textwidth=6.5in, textheight=9in, marginparsep=1em}
\usepackage{sectsty}
\usepackage{caption}
\usepackage{subcaption}
\usepackage{amsmath}
\usepackage{tikz}
\usetikzlibrary{positioning}
\DeclareMathOperator{\E}{\mathrm{E}}

\newcolumntype{b}{X}
\newcolumntype{s}{>{\hsize=.75\hsize}X}

\sectionfont{\sffamily\upshape\large}
\subsectionfont{\sffamily\upshape\normalsize}
\subsubsectionfont{\sffamily\mdseries\upshape\normalsize}
\makeatletter
\makeatother
\usepackage[authoryear]{natbib}
\usepackage{csquotes}

\renewcommand{\P}{\mathbb{P}}
\renewcommand{\E}{\mathbb{E}}

\newcommand{\betav}{\boldsymbol{\beta}}

\newcommand{\lambdav}{\boldsymbol{\lambda}}
\newcommand{\yv}{\mathbf{y}}

\newcommand{\Xm}{\mathbf{X}}

\newcommand{\thetav}{\boldsymbol{\theta}}

\newcommand{\RR}{\mathbb{R}}

\newcommand{\yrep}{\mathbf{y}_{\mathrm{rep}}}

\renewcommand{\b}{\mathrm{base}}

\newcommand{\Thetav}{\boldsymbol{\Theta}}

\newcommand{\postp}{\mathsf{post-}p}
\newcommand{\partp}{\mathsf{part-}p}
\newcommand{\condp}{\mathsf{cond-}p}
\newcommand{\jointp}{\mathsf{joint-}p}
\newcommand{\sampp}{\mathsf{sampled-}p}
\newcommand{\calp}{\mathsf{cal}-p}

\newtheorem{definition}{Definition}

\title{\bf Joint $p$-Values for Higher-Powered Bayesian Model Checking with Frequentist Guarantees}
\author{Collin Cademartori\\Department of Statistics, Columbia University\\ cac2301@columbia.edu}
\date{September 22, 2023}

\begin{document}
\maketitle


\abstract{We introduce a joint posterior $p$-value, an extension of the posterior predictive $p$-value for multiple test statistics, designed to address limitations of existing Bayesian $p$-values in the setting of continuous model expansion. In particular, we show that the posterior predictive $p$-value, as well as its sampled variant, become more conservative as the parameter dimension grows, and we demonstrate the ability of the joint $p$-value to overcome this problem in cases where we can select test statistics that are negatively associated under the posterior. We validate these conclusions with a pair of simulation examples in which the joint $p$-value achieves substantial gains to power with only a modest increase in computational cost.}

\section{Introduction}\label{sec:introduction}

Checking the adequacy of a statistical model is an essential step in almost any applied modeling workflow \citep{BayesWorkflow1,BayesWorkflow2,BayesWorkflow3,BoxsLoop}. When a model's assumptions have not been tested against their observable consequences, inferences about unobservable quantities obtained through such models must be interpreted skeptically. However, the process of checking a model is often not straightforward, and it is subject to a number of confusions in the Bayesian setting in particular. For instance, we find substantial disagreement in the literature over questions such as:

\begin{enumerate}
\item Is our goal to subject our model to the strongest possible test of its compatibility with (some feature of) the data, in order to have the best possible chance of rejecting the model? Or, is our goal to generate assessments of fitness which can help us to identify particular improvements we can make to the model? We will term the former the \textit{rejection goal}, which is strongly advocated for by \citet{RobinsEtAl00}. This perspective is explicitly rejected by \citet{GelmanUValues}, who advocates instead for more qualitative methods that enable ``[understanding] in what ways the fitted model departs from the data.'' This view closely tracks with our latter goal, which we will refer to as the \textit{discovery goal}.
\item Do we need to know the frequency properties of our model checking procedures in order to interpret their output? Or can we achieve the relevant goals by using ``purely'' Bayesian calculations? And would using frequency calculations undermine the Bayesian consistency or validity of our analysis? Arguments for the importance of frequency information can found in \citet{RobinsEtAl00} and \citet{BayarriBerger00}, whereas arguments in the opposite direction are given in \citet{GelmanExamples} and \citet{GelmanUValues}.
\end{enumerate}

We first note that the distinction between the goals of model rejection and model discovery can be arbitrarily sharp. An oracle which provides a yes or no answer, for any proposed model, to the question of whether it is a valid description of the true data generating process gives us $100\%$ power against any alternative. But such a binary oracle offers little help in diagnosing the source of the model's inaccuracy or in finding plausible directions for improvement. Combined with Box's famous adage - ``all models are wrong, but some are useful'' - one might argue on this basis that pursuing the discovery goal is the more practical strategy. In the Bayesian setting, this is most commonly achieved by comparing observed data to simulations from the model's posterior predictive distribution - i.e. the model's data distribution $p(\cdot\mid\thetav)$ averaged over the posterior $p(\thetav\mid\yv)$. In numerical form, this leads to the posterior predictive $p$-value, but advocates of the posterior predictive check often recommend qualitative visual checks for their higher density of information \citep{BayesWorkflow3}. In this setting, concerns over frequency properties are either not relevant (in the case of the $p$-value, which can be interpreted directly as a posterior probability) or not well-defined (in the case of visual assessments, where formal decision processes are rarely defined).

However, when we pursue the rejection goal, frequency evaluations become much more relevant. \citet{PostP} showed that the posterior predictive $p$-value has a frequency distribution which is stochastically less variable than uniform (when data $\yv$ are sampled from the model's prior predictive distribution $p(\yv)$). As a consequence, the frequency of a given posterior predictive $p$-value is usually less than its nominal value, and sometimes substantially so. If we test the model by comparing the $p$-value to some threshold, then such tests will be conservative or underpowered compared to a test using the corresponding frequency. Moreover, it has been observed that the size of this power deficit can be quite large in practice \citep{HierarchicalCheck1,HierarchicalCheck2,PostPPharma}.

This paper makes three arguments:
\begin{enumerate}
\item The rejection goal can become practically relevant even in a workflow that takes the discovery goal as its primary concern.
\item We can effectively pursue the rejection goal by testing multiple statistics simultaneously with a joint $p$-value that achieves a balance of computational tractability, finite sample performance, and ability to scale with model complexity. 
\end{enumerate}

In practice, pursuing the discovery goal often entails constructing many models, each designed to improve fitness in response to a check of a previous model. 
However, this process of model multiplication must eventually terminate, at least temporarily, due either to the diminishment of identifiable routes for further improvement or the need to use the model for some downstream task. We therefore want to evaluate the risks of stopping at any given time in the model building process. In particular, we may reasonably wish to judge if our current model is acceptable for some task.

In many cases, the acceptability of a model will be directly related to the difficulty of the tests which it has passed. A model which has been checked only superficially is often untrustworthy. Thus, the evaluation of a potential stopping time is connected not just to outcome but also to the power of a test.
Whereas rejection-oriented tools take power-maximization as a first concern, tools oriented towards the discovery goal may be less helpful towards this end, particularly when we consider stopping in response to such discovery-oriented tools becoming less informative. For instance, while checking fitness to individual features of the data is more useful for diagnosing specific sources of misfit, omnibus tests that aim to check the model more holistically can achieve higher power for rejection purposes.

As a general strategy to obtain higher power for model rejection, we propose computing a posterior predictive $p$-value for a collection of test statistics.

\begin{definition}[Joint $p$-value]
  Let $\mathcal{T} = \left\lbrace T_1,\ldots,T_d\right\rbrace$ be a set of test statistics. Then the joint $p$-value is given by
  \begin{equation}
  \label{eq:joint_pval}
    \jointp_{\mathcal{T}}(\yv) = \P_{p(\yrep\mid\yv)}\left[ T_1(\yrep) > T_1(\yv) \text{ and } T_2(\yrep) > T_2(\yv) \text{ and}\; \cdots\; T_d(\yrep) > T_d(\yv) \right].
  \end{equation}
  We also define a sampled variant of the above, given by
  \begin{equation}
    \label{eq:joint_pval_sampled}
    \jointp_{\mathcal{T}}(\yv,\thetav) = \P_{p(\yrep\mid\thetav)}\left[ T_1(\yrep) > T_1(\yv) \text{ and } T_2(\yrep) > T_2(\yv) \text{ and}\; \cdots\; T_d(\yrep) > T_d(\yv) \right],
  \end{equation}
where $\thetav$ is drawn from the posterior distribution $p(\thetav\mid\yv)$, resulting in a random $p$-value.
\end{definition}
The key idea behind our approach is that testing many statistics at once can substantially increase the difficulty of the model check, which can allow large improvements to power with appropriately set thresholds for rejection. This joint $p$-value can be much easier to compute in practice than the most powerful calibration-based model checks, and it enjoys finite-sample guarantees that many simpler methods cannot provide. Furthermore, as we will see in Section \ref{sec:expansion_arg}, computationally simpler alternatives tend to suffer from degrading performance as model complexity increases. By increasing the number of statistics in our test, the joint $p$-value can often scale up with the complexity of the model.
\subsection{Outline}

This paper is organized as follows. In Section \ref{sec:expansion_arg}, we review Meng's bound establishing the conservativity of the posterior predictive $p$-value and show that the gap in this bound tends to grow when a base model is expanded to a larger model. Section \ref{sec:rejection_tools} reviews existing approaches to model rejection within a Bayesian framework and compares them to our proposed joint $p$-value on the basis of their computational and interpretational properties. We also present a simple extension of Lemma 1 in \citet{PostP} which provides a bound on the frequency of any given joint posterior predictive $p$-value. We verify that this strategy can gain power from the additional information contained in the joint structure of the test statistics by studying this bound under copula models of test statistic dependence in Section \ref{sec:copulas}. Section \ref{sec:experiment} then presents a pair of numerical experiments in which we compare our joint $p$-value to a number of alternatives, demonstrating that our method achieves a practically useful trade-off between interpretability, power, and computational tractability. Finally, Section \ref{sec:discuss} discusses the role for our method in a crowded landscape of model checking tools and considers directions for future work.

\section{Model rejection with $\postp$}\label{sec:expansion_arg}

We now present a systematic argument for why a discovery-first modeling workflow should take the rejection goal seriously and why the usual discovery-focused tools cannot be used for this purpose. A very common technique for Bayesian model rejection is to compare the posterior predictive $p$-value ($\postp$) to some threshold. The classic argument against using this procedure for model rejection - that it is overly conservative - can be formalized using the concept of convex order. For distributions $p,q$, we say that $p$ is less than $q$ in convex order ($p \ll q$) if, for $X\sim p$, $Y\sim q$, and any convex function $\psi$, we have that
\begin{equation}
  \label{eq:convex_order}
  \E\psi(X) \leq \E\psi(Y).
\end{equation}
\citet{PostP} showed that $\postp$ is dominated in convex order by a uniform random variable. To demonstrate this, let $p_T(\yv,\thetav)$ be the $p$-value computed with respect to $p(\yv\mid\thetav)$, i.e.
\begin{equation}
  \label{eq:cond_pval}
  p_T(\yv,\thetav) = \P_{p(\yrep\mid\thetav)}\left[  T(\yrep) \geq T(\yv)\right].
\end{equation}
Then we have that
\begin{equation}
  \label{eq:conservative}
  \E_{p(\yv)}\psi\left( \postp_T(\yv) \right) = \E_{p(\yv)}\psi\left( \E_{p(\thetav\mid\yv)}p_T(\yv,\thetav) \right) \stackrel{(a)}{\leq} \E_{p(\thetav)}\E_{p(\yv\mid\thetav)}\psi\left( p_T(\yv,\thetav) \right) \stackrel{(b)}{=} \E\psi(U),
\end{equation}
where $U$ is a uniform random variable. Here, $(a)$ follows by Jensen's inequality, and $(b)$ follows from the definition of $p_T(\yv,\thetav)$ and the fact that any $p$-value has a uniform distribution under its assumed sampling distribution (by the probability integral transform). Roughly, this convex ordering means that $\postp$ will tend to have a distribution that is more peaked around $0.5$, and thus it will commonly be true that
\begin{equation}
  \label{eq:frequency_trend}
  f_{T}(\alpha) \stackrel{\mathrm{def}}{=} \P\left( p_T(\yv) \leq \alpha \right) < \alpha
\end{equation}
for sufficiently small values of $\alpha$. We can thus see that for a sufficiently small threshold $p^*$, when \eqref{eq:frequency_trend} holds, the test that rejects when $p_T(\yv) < p^*$ is lower power than the test that rejects when $f_T\left( p_T(\yv) \right) < p^*$.

The Bayesian who does not want to be concerned with frequency calculations may reasonably wonder at this point whether this claimed power deficit will be an issue in practice. Indeed, this argument does not show that $\postp$ is useless for model rejection. Meng also showed that
\begin{equation}
  \label{eq:frequency_bound_meng}
  \P\left( p_T(\yv) \leq \alpha \right) \leq 2\alpha
\end{equation}
for all $\alpha$. Thus, when $p_T(\yv)$ is sufficiently small, we will still have sufficient information to reject the model on frequentist grounds without the need to compute or approximate $f_T(p_T(\yv))$. Indeed, many examples show that $\postp$ can work quite well for this purpose, and one can always choose a more skeptical threshold if power is a substantial concern.

Of course, the viability of this strategy relies entirely on \textit{how} non-uniform $\postp$ is in any given case. If $\postp$ becomes severely non-uniform and is sharply peaked around $0.5$, then the only way to achieve significance levels that aren't extremely conservative may be to place the nominal threshold at levels so large (e.g. $>0.4$) that they would never be recommended absent direct evidence of this degree of peakedness (since they would result in unreasonably large significance levels in other cases). Consistent with this, it has been observed that large variation in the conservativity of $\postp$ across models and test quantities undermines its consistent interpretation \citep{ModelCheckingGuide}. In short, we can expect $\postp$ to give reasonable rejection performance only when it is consistently not-too-severely non-uniform.

\subsection{Conservativity of $\boldsymbol{\postp}$ and discovery-driven model expansion}

In light of the above arguments, it is clear that we need an understanding of how non-uniform $\postp$ may be in practice to adjudicate the relevant concerns. Examining \eqref{eq:conservative}, we can see that the degree of non-uniformity is entirely controlled by the size of the gap in the inequality $(a)$. It is well-known that the gap in Jensen's inequality can be bounded above and below as
\begin{equation}
  \label{eq:jensen_gap}
  \sigma^2_{\yv}\frac{\inf \psi''}{2} \leq \E_{p(\thetav\mid \yv)}\psi\left( p_T(\yv,\thetav) \right) - \psi\left( \E_{p(\thetav\mid\yv)}p_T(\yv,\thetav) \right)\leq \sigma^2_{\yv}\frac{\sup\psi''}{2},
\end{equation}
where
\begin{equation}
  \label{eq:gap_variance}
  \sigma^2_{\yv} = \mathrm{Var}\left[ p_T(\yv,\thetav)\mid\yv \right].
\end{equation}
Thus, the non-uniformity of $\postp_T(\yv)$ is controlled by the average size of $\sigma^2_{\yv}$. In particular, taking $\psi(x) = (x-1/2)^2$, \eqref{eq:jensen_gap} shows that larger values of $\sigma^2_{\yv}$ imply sharper concentration of the distribution of $\postp_T(\yv)$ around $1/2$. We claim that, for at least some $T$, we should expect $\sigma^2_{\yv}$ to increase throughout a discovery-driven modeling workflow. To formalize this claim, we begin with the following assumption.

\textbf{Workflow Assumption.} \textit{In a modeling workflow that emphasizes an open-ended process of model criticism and model improvement, our models will tend to become more complex and require higher-dimensional parameter spaces in order to accommodate those features of the data which are observed empirically but are not accounted for in our existing models.}

This assumption of model improvement as requiring model expansion may not always hold, for instance if we move from a generic initial model to a more specialized model designed with particular domain knowledge. Nevertheless, we believe this assumption is valid in many settings, as model improvement often requires accounting for unanticipated sources of variation (e.g. overdispersion, random effects, nonlinearity), which results in models that are higher-dimensional than their predecessors. We now formalize the notion of model expansion so that we can study its effects on the variance $\sigma^2_{\yv}$.

\begin{definition}[Model Expansion]
  A model $p(\yv,\thetav,\lambdav)$ defined with additional parameter $\lambdav\in\overline\RR^k$ is an expansion of base model $p_{\b}(\yv,\thetav)$ if
\begin{equation}
  \label{eq:expansion_def}
  p_{\b}(\yv,\thetav) = p(\yv,\thetav\mid\lambdav_0) \text{ for some } \lambdav_0\in\overline{\RR}^k,
\end{equation}

where $\overline{\RR}=[-\infty,\infty]$. If $\lambdav_0 = \pm \infty$, then the right-hand side is interpreted as a pointwise limit as $\lambdav_0\to\pm\infty$.
\end{definition}

In words, $p$ is an expansion of $p_{\b}$ if it embeds $p_{\b}$ as a conditional distribution. Our workflow assumption can be formalized as the proposition that a discovery-driven modeling workflow will tend to produce models which are expansions of previous models. Furthermore, when passing from a base model to an expanded model in this way, we can see by the law of total variance that
\begin{align*}
  \label{eq:p_var_ineq}
  \mathrm{Var}_{p}\left[ p_T(\yv,(\thetav,\lambdav))\mid\yv \right] &= \E\left\lbrace\mathrm{Var}_p\left[ p_T(\yv,(\thetav,\lambdav))\mid\yv,\lambdav \right]\mid \yv\right\rbrace + \mathrm{Var}_p\left\lbrace \E\left[ p_T(\yv,(\thetav,\lambdav))\mid\yv,\lambdav \right] \mid \yv\right\rbrace\\
  &= \mathrm{Var}_{p_{\b}}\left[ p_T(\yv,\thetav) \right] + \Delta + \mathrm{Var}_p\left\lbrace \E\left[ p_T(\yv,(\thetav,\lambdav))\mid\yv,\lambdav \right]\mid\yv\right\rbrace,
\end{align*}
where we define
\[
 \Delta = \E\left\lbrace\mathrm{Var}_p\left[ p_T(\yv,(\thetav,\lambdav))\mid\yv,\lambdav \right]\mid \yv\right\rbrace - \mathrm{Var}_p\left[ p_T(\yv,(\thetav,\lambdav))\mid\yv,\lambdav=\lambdav_0 \right].
\]
We note that the second equality follows from the fact that $p_{\b}(\thetav\mid \yv) = p(\thetav\mid \yv,\lambdav=\lambdav_0)$ and $p_{\b}(\yv\mid\thetav) = p(\yv\mid\thetav,\lambdav=\lambdav_0)$. In any given model expansion, $\Delta$ may be positive or negative, as $\mathrm{Var}_p\left[ p_T(\yv,(\thetav,\lambdav))\mid\yv,\lambdav \right]$ may vary arbitrarily over the support of $p(\lambdav\mid\yv)$. On the other hand, we clearly always have $\mathrm{Var}_p\left\lbrace \E\left[ p_T(\yv,(\thetav,\lambdav))\mid\yv,\lambdav \right]\mid\yv\right\rbrace\geq 0$. Thus, this identity along with \eqref{eq:jensen_gap} strongly suggests that $\sigma^2_{\yv}$ - and thus the non-uniformity of $\postp$ - tends to increase through the process of model expansion.

We also note that this problem is not exclusive to $\postp$. If the posterior predictive $p$-value is highly non-uniform, then we should expect similar posterior predictive checks, such as replication plots, to be problematic for purposes of model rejection as well. A check which produces replications that appear visually similar to the observed data $20\%$ of the time would usually be considered a positive result for the proposed model. It may just as easily be true that if the model were correct, such a visual check would produce sampled data similar to the observed data in a much higher proportion of replications. In short, visual checks can be conservative in the same way as numerical $p$-values.

We draw two conclusions from these observations. First, the above shows that when our tools for model discovery lead us to larger models, they also tend to lead toward models that are harder to reject with observable data insofar as our $\postp$ values become increasingly conservative. While we may be willing to accept a trade-off in favor of tools that emphasize discovery over rejection all else equal, we believe few applied researchers would be comfortable with an arbitrarily high and increasing risk of selecting nearly unfalsifiable models. If this is correct, then this indicates a need to take the rejection goal seriously as an independent concern in model checking.

Second, these calculations show that existing and common model checking tools such as $\postp$ are not suited to the rejection goal at least without some modification. Instead, what is needed is a model checking tool for which the difficulty of the assessment can be scaled to match the complexity of the model appropriately. In the next section, we review some existing proposals for remedying the non-uniformity of $\postp$ and introduce our proposed method, the joint $p$-value.

\section{$p$-Values for model rejection}\label{sec:rejection_tools}

We now consider possible methods for partly remedying the difficulties associated with the posterior predictive $p$-value as a tool for model rejection.  We begin with attempts to derive $p$-values which have exactly or approximately uniform distributions, and then turn to our proposed joint $p$-value.

\subsection{Exactly and approximately calibrated $p$-values}

\citet{PostProcessPValues} propose to overcome the conservativity of $\postp$ by plugging it into (an estimate of) its distribution function, which will result in a uniformly distributed quantity when the model is correctly specified. In particular, if $H$ is the distribution function of $\postp_{\yv}$ when $\yv$ is drawn from the prior predictive distribution, then we can estimate $H$ by the empirical distribution function $\hat{H}(p) = \frac{1}{S}\sum_{s=1}^S\mathbbm{1}\left\lbrace \postp_T(\yrep^{(s)}) \leq p \right\rbrace$, where $\left\lbrace \yrep^{(s)}\right\rbrace_{s=1}^S \stackrel{iid}{\sim} p(\yv)$ is a sample from the prior predictive distribution. The calibrated posterior predictive $p$-value is then
\begin{equation}
  \label{eq:calibrated_pval}
  \calp_{\yv} = \hat{H}\left(\postp_T(\yv)\right).
\end{equation}

This calibration step fully resolves the conservativity problem when $H$ is well-estimated by $\hat{H}$. However, the computation of $\hat{H}$ generally requires sampling from $p(\thetav\mid\yrep^{(s)})$ separately for each $s=1,\ldots,S$. This can quickly become computationally infeasible for moderate $S$ if the model is sufficiently complex.

The sampled posterior predictive $p$-value retains exact uniformity for continuously distributed test statistics $T$, but it is almost universally easy to compute \citep{SampledPValues,PivotalSampled,ChiSquaredGOF}. Unlike the other approaches we consider, this method generates a \textit{random} $p$-value by first drawing a sample $\thetav$ from the posterior distribution $p(\thetav\mid\yv)$ and then computing a $p$-value with respect to $p(\yrep\mid\thetav)$. In symbols:
\begin{equation}
  \label{eq:sampled_pval}
  \sampp_{T}(\yv,\thetav) = \P_{p(\yrep\mid\thetav)}\left( T(\yrep) \geq T(\yv) \right), \quad \thetav\sim p(\thetav\mid\yv).
\end{equation}

The posterior predictive $p$-value is just the expected value of $\sampp_{T}$ over the posterior distribution. Estimating \eqref{eq:sampled_pval} by a Monte Carlo average is generally extremely fast since sampling $p(\yrep\mid\thetav)$ is trivial for most models. 
The uniformity of $\sampp_{T}$ is achieved by not aggregating over the posterior distribution. However, this calibration comes at the cost of increased variance due to the additional randomization of $\thetav$. As a result, the sampled $p$-value can exhibit lower power when this variance is large.


In particular, this property means that the sampled $p$-value, like the posterior predictive $p$-value, generally scales poorly with model expansion. To see this quantitatively, recall that $\postp_T(\yv)$ tends to become more concentrated around $1/2$ with model expansion. Defining $Q(p) = \P_{p(\yv)}\left( \postp_T(\yv) \leq p \right)$, we think of the calibrated $p$-value $Q(\postp_T(\yv))$ as representing the true level of fitness of the model to $T$. The contraction of the distribution of $\postp_T(\yv)$ then implies that the nominal value $\postp_T(\yv)$ must increase with model expansion in order to maintain a fixed true level of fitness $Q(\postp(\yv))$. Then, for any $\delta < \postp(\yv)$, the Paley-Zygmund inequality implies that
\begin{equation}
  \label{eq:paley_zygmund}
  \P\left( \sampp_T(\yv,\thetav) > \delta \right) \geq \phi\left( 4\left[ \postp_T(\yv) - \delta \right] \right),
\end{equation}
where $\phi = \frac{x^2}{1+x^2}$. Because $\phi$ is an increasing function, this implies that our lower bound \eqref{eq:paley_zygmund} - and thus the risk of observing large sampled $p$-values - will also tend to increase with model expansion if the true level of fitness is nondecreasing.

Other methods have been proposed that trade exact calibration for approximate calibration and improved computability (compared to $\calp$) or power (compared to $\sampp$). \citet{BayarriBerger99} propose $p$-values which are Bayesian in the sense that they account for posterior uncertainty but which enjoy reduced conservativity relative to $\postp$ by having a uniform frequency distribution in appropriate asymptotics. The key idea for achieving asymptotic uniformity comes from the observation that $\postp$ involves a double use of the data whereby the posterior ``sees'' the statistic $T$ against which it will subsequently be tested. This artificially reduces the difficulty of the test, leading to conservativity.

This diagnosis is partly justified by considering tests with ancillary statistics $T$. Since these have distributions which are independent of $\thetav$, the posterior contains no information about $T$, and $\postp$ becomes exactly uniform for such $T$. The proposed $p$-values attempt to formalize the idea of ``removing'' the information in $T$ from the posterior before testing. The first of these is the conditional predictive $p$-value, defined for a test statistic $T$ as
\begin{equation}
  \label{eq:conditional_pval}
  \mathsf{cond-}p_T(\yv) = \P_{p\left( \yrep\mid \hat{\thetav}_T \right)}\left( T(\yrep) \geq T(\yv) \right),
\end{equation}
where we define $\hat{\thetav}_T = \mathrm{arg}\max p\left( \yv\mid\thetav,T(\yv) \right)$ as the $T$-conditional maximum likelihood estimate of $\thetav$, and
\begin{equation}
  \label{eq:mle_post}
  p\left( \yrep \mid \hat{\thetav}_T \right) = \int p\left( \yrep\mid \thetav,T(\yv) \right)p\left( \thetav\mid \hat{\thetav}_T \right) d\thetav.
\end{equation}

The key idea in this definition is that $\hat{\thetav}_T$ should capture as much of the information about $\thetav$ contained in the data as possible while excluding the information in $T$. When $\hat{\thetav}_T$ is sufficient for $\thetav$, $\mathrm{cond-}p_T$ is exactly uniform. However, forming and conditioning on the conditional MLE is often computationally difficult. We can instead try to remove the information contained in $T$ from the posterior directly by conditioning $T$ out of the likelihood. This results in Bayarri and Berger's partial predictive $p$-value:
\begin{equation}
  \label{eq:partial_pval}
  \mathrm{part-}p_T\left( \yv \right) = \P_{p\left( \yrep\mid\yv\setminus T(\yv) \right)}\left( T(\yrep) > T(\yv) \right),
\end{equation}
where we define the partial posterior and posterior predictive distributions as
\begin{equation}
  \label{eq:partial_posterior}
  p(\thetav\mid\yv\setminus T(\yv)) \propto p(\yv\mid \thetav, T(\yv))p(\thetav), \quad p(\yrep\mid\yv\setminus T(\yv)) = \int p(\yrep\mid\thetav) p(\thetav\mid\yv\setminus T(\yv)) d\thetav.
\end{equation}
Since $T(\yv)$ is determined exactly by $\yv$, the partial posterior differs from the posterior by a factor proportional to $p(T(\yv)\mid \thetav)^{-1}$.

That these $p$-values approximately succeed in removing the conservativity problem is justified by Theorem 2 of \citet{RobinsEtAl00}, which implies that $\condp$ and $\partp$ both have asymptotically uniform frequency distributions under sampling models of the form
\begin{equation}
  \label{eq:asymp_model}
  p\left( \yv \mid \thetav, \psi_n \right) = \prod_{i=1}^n p_i\left( \yv_i\mid \thetav, \psi_n \right),
\end{equation}
where $\psi_n\in\RR$ is a one-dimensional nuisance parameter. Robins et al. also propose a number of other methods for deriving approximately calibrated $p$-values which depend on either modifications of the test statistic $T$ or on approximate recalibrations of simpler $p$-values such as $\postp$. We do not treat these approaches in detail here since any generally available computational speedups relative to $\partp$ and $\condp$ are usually achieved by exploiting some aspect of the asymptotics of \eqref{eq:asymp_model}, which we argue in the next section is an overly limiting model in many cases. The interested reader can consult \cite{RobinsEtAl00} for details. Because $\partp$ and $\condp$ have identical asymptotic performance under $\eqref{eq:asymp_model}$ and $\partp$ is generally easier to compute, we will focus all subsequent comparisons on $\partp$.

We note that $\partp$ can still suffer from substantial computational costs when $p(T(\yv)\mid \thetav)^{-1}$ is not analytically available, which is usually the case when the model is sufficiently complex. We are unaware of a scheme for estimating this quantity in general other than estimating $p(T(\yv)\mid \thetav)$ with a kernel density estimator and inverting the result (which is the recommended strategy in \citet{BayarriBerger00}). Such kernel density estimates can be highly inefficient in the tails of the density, leading to explosive errors in the inverse.


\subsection{Modifying the test quantity and joint $p$-values}

The exactly and approximately calibrated $p$-values of the last section were based on the idea that posterior predictive checks can be too easy when we fail to set our thresholds for rejection relative to the corresponding frequency distribution. Calibrating the $p$-values allows us to set thresholds appropriately to maintain a certain level of difficulty. However, Meng's bound \eqref{eq:frequency_bound_meng} tells us that the miscalibration problem is asymmetric. If our nominal $p$-value is so small that twice that value is below our threshold, then we can still confidently reject our model on frequentist grounds.

We may also try to increase the difficulty of our tests by modifying our choice of test quantity. One way to achieve this is with ancillary statistics, which yield posterior predictive $p$-values that are exactly uniform \citep{GelmanExamples}.  However, discovering ancillary (or approximately ancillary) statistics is often difficult. And if our workflow assumption above holds, then we expect the discovery of ancillary statistics to become more difficult as our model size increases and our sampling distributions accommodate a greater variety of data behaviors. Since our primary concern is constructing tests for model rejection which can scale with model complexity, this is particularly worrying.

Similarly, the use of pivotal discrepancy measures (which may depend on parameters $\thetav$ as well as data $\yv$) has been proposed since calibrating the corresponding $p$-values is easier \citep{PostDiscrep,PivotalCal,PivotalChecks}. But pivotal quantities may not exist when the observed data are not independent given the parameters, and there are no guarantees that pivotal quantities exist which quantify any particular feature of interest even when this assumption holds.

Another method for increasing the difficulty of model checks is to hold out some portion of the data with which the test quantity is computed and then compare this quantity to the model fit to the remainder of the data \citep{LOOCV,CVIS}. Like the exactly calibrated $p$-value, this approach requires repeated sampling from the posterior distribution for different sets of observed data, which is often prohibitively computationally expensive. Faster approximate procedures have been proposed, but none of these can be applied across all types of models reliably \citep{CVIIS,CVIIS_DM,CVGhosting}.

A more general and easily-applied approach for generating harder tests of our models is obtained by using many test statistics at once. If $\mathcal{T} = \{T_s\}_{s=1}^d$ is a collection of test statistics, then the corresponding joint posterior predictive $p$-value is
\begin{equation}
  \label{eq:joint_pval_2}
  \jointp_{\mathcal{T}}(\yv) = \P_{p(\yrep\mid\yv)}\left( T_1(\yrep) > T_1(\yv) \text{ and } T_2(\yrep) > T_2(\yv) \text{ and}\; \cdots\; T_d(\yrep) > T_d(\yv) \right).
\end{equation}

An obvious problem with using a joint $p$-value is that we expect its observed value to shrink towards $0$ as $d$ increases even if the proposed model is correct. Furthermore, the joint $p$-value no longer satisfies Meng's bound \eqref{eq:frequency_bound_meng}. The first step towards making $\jointp$ useful is thus a simple generalization of Meng's bound which applies to multiple test statistics.

\begin{theorem}[Frequency Bound for $\jointp$]\label{thm:frequency_bound_joint}
  For any level $\alpha\in [0,1]$, we have that
  \begin{equation}
    \label{eq:frequency_bound_joint}
    \P_{p(\yv)}\left( \jointp_{\mathcal{T}}(\yv) \leq \alpha \right) \leq \inf_{s\in[\alpha,1]}\frac{\int_{0}^sF(t)dt}{s-\alpha},
  \end{equation}
  where $F$ is the cumulative distribution function of the random variable
  \begin{equation}
    \label{eq:joint_conditional}
    \P_{p(\yrep\mid\thetav)} \left[T_1(\yrep) > T_1(\yv) \text{ and } T_2(\yrep) > T_2(\yv) \text{ and}\; \cdots\; T_d(\yrep) > T_d(\yv)\right] .
  \end{equation}
\end{theorem}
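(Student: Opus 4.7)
The plan is to adapt Meng's single-statistic argument by replacing the probability-integral-transform step (which relied on the inner $p$-value being uniform) with a more general Markov-plus-Jensen combination. Let $X = \jointp_{\mathcal{T}}(\yv)$ and let $Y$ denote the random variable in \eqref{eq:joint_conditional}; by the tower property applied to $p(\yrep\mid\yv) = \int p(\yrep\mid\thetav)p(\thetav\mid\yv)\,d\thetav$, we have $X = \E[Y\mid\yv]$, and both $X,Y$ take values in $[0,1]$. The CDF $F$ is that of $Y$ under $p(\yv,\thetav)$, so that the $\yv$-marginal coincides with $p(\yv)$.

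First I would apply a one-sided Markov inequality: for any $s > \alpha$,
\[
\P_{p(\yv)}\!\left(X \leq \alpha\right) = \P_{p(\yv)}\!\left(s - X \geq s - \alpha\right) \leq \frac{\E_{p(\yv)}[(s-X)_+]}{s-\alpha}.
\]
Second, since $\psi(x) = (s-x)_+$ is convex, Jensen's inequality applied to the conditional expectation $X = \E[Y\mid\yv]$ yields $(s-X)_+ \leq \E[(s-Y)_+ \mid \yv]$, so taking outer expectation gives
\[
\E_{p(\yv)}[(s-X)_+] \leq \E_{p(\yv,\thetav)}[(s-Y)_+].
\]
Third, the layer-cake identity $(s-Y)_+ = \int_0^s \mathbbm{1}\{Y \leq t\}\,dt$ combined with Fubini's theorem gives
\[
\E_{p(\yv,\thetav)}[(s-Y)_+] = \int_0^s F(t)\,dt.
\]
Chaining these three inequalities and taking the infimum over $s \in [\alpha,1]$ (with the convention that $s = \alpha$ yields the trivial $+\infty$ bound) produces \eqref{eq:frequency_bound_joint}.

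I do not foresee any real technical obstacle in executing this plan; the only moving part requiring care is the choice of auxiliary convex function $\psi(x) = (s-x)_+$, selected precisely so that the directions of Jensen's and Markov's inequalities align to give an upper bound. As a consistency check, when $d=1$ and $T_1$ is continuous, $Y$ is uniform on $[0,1]$, so $F(t) = t$ and the right-hand side of \eqref{eq:frequency_bound_joint} reduces to $\inf_{s > \alpha} s^2/[2(s-\alpha)]$, minimized at $s = 2\alpha$, recovering Meng's classical bound $2\alpha$ in \eqref{eq:frequency_bound_meng}. This recovery of the known single-statistic result in the appropriate limit lends credence to the proposed approach.
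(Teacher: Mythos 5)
Your proof is correct and is essentially the paper's own argument in slightly different clothing: your Jensen step with the single convex function $\psi(x)=(s-x)_+$ combined with the layer-cake identity is exactly the paper's integrated-CDF consequence of convex order, $\int_0^s G(t)\,dt \leq \int_0^s F(t)\,dt$, and your one-sided Markov step is exactly the paper's inequality $\int_0^s G(t)\,dt \geq G(\alpha)(s-\alpha)$ obtained from monotonicity of $G$. The only (cosmetic) difference is that you specialize Jensen to the extremal convex functions $(s-\cdot)_+$ up front, which lets you bypass the paper's detour through general convex order and the equal-means step; your consistency check recovering Meng's $2\alpha$ bound when $d=1$ is also correct.
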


\begin{proof}
  See Appendix \ref{app:bound_proof}
\end{proof}

To directly estimate the cumulative distribution function of $\jointp$ or $\postp$ requires repeated simulation of the posterior predictive distribution $p(\yrep\mid\yv)$ for each draw of $\yv$ from the model. In all but the simplest models, this requires sampling from $p(\thetav\mid\yv)$ for each such $\yv$, which will often be prohibitively expensive. Theorem \ref{thm:frequency_bound_joint} shows that we can bound the cumulative distribution function of $\jointp$ by an optimum involving the cumulative distribution function of \eqref{eq:joint_conditional}, which can be simulated directly and efficiently with draws from $p(\yrep\mid\thetav)$.

\begin{figure}[t]
  \centering
  \includegraphics[scale=0.55]{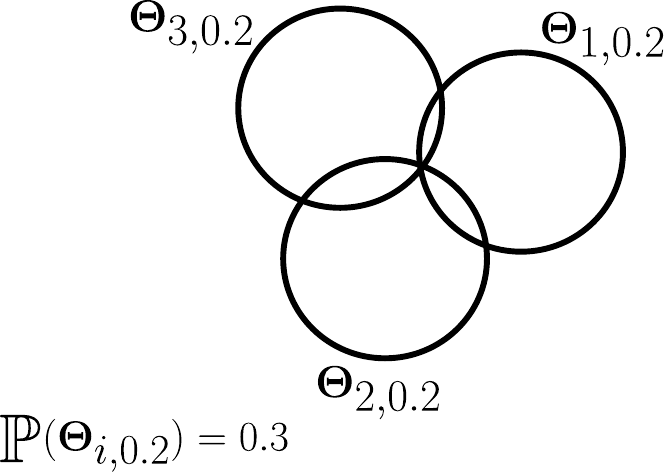}
  \includegraphics[scale=0.55]{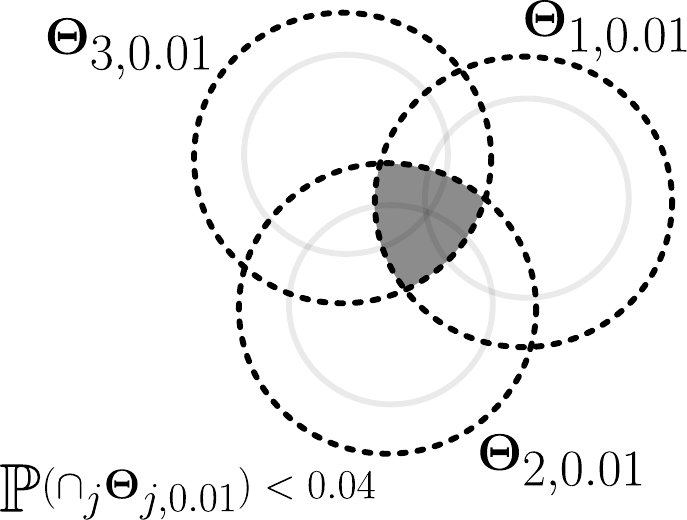}
  \caption{A schematic representation of how the marginal posterior predictive $p$-values can be relatively large while the joint $p$-value is small. In the left panel, because the $\Thetav_{s,0.2}$ have posterior probability $0.3$, $\postp_{T_s}$ is bounded below by $0.3\times 0.2 = 0.06$. In the right panel, because the intersection of the $\Thetav_{j,0.01}$ has posterior probability less than $0.04$, $\jointp$ is bounded above by $0.01\times 0.96 + 1\times 0.04 < 0.05$.}
  \label{fig:joint_schematic}
\end{figure}

Thus, Theorem \ref{thm:frequency_bound_joint} establishes that $\jointp$ can be interpreted for purposes of model rejection by computing a bound on its frequency. But we have yet to establish that $\jointp$ improves on $\postp$ for rejection purposes in general. Since we expect $F$ to increase more sharply at $0$ as $d$ increases, the bound \eqref{eq:frequency_bound_joint} will generally get worse with increasing $d$ for a fixed level of the joint $p$-value. Nevertheless, this bound can still provide value over $\postp$ for rejection purposes if the nominal $p$-value \eqref{eq:joint_pval_2} compensates by falling fast enough with $d$.

This can occur, for instance, when the values of $\thetav$ for which $p(\yv\mid\thetav)$ best fits each $T_s\in\mathcal{T}$ lie in mostly distinct subsets of the parameter space. In particular, define
\begin{equation}
  \label{eq:typical_subspace}
  \Thetav_{s,\alpha} = \{\thetav\in \Thetav \mid \P_{p(\yrep\mid\thetav)}(T_s(\yrep) \geq T_s(\yv)) \geq\alpha \}.
\end{equation}
Each $\Thetav_{s,\alpha}$ can be thought of as the subspace corresponding to data generating processes for which the observed $T_s$ is not atypical. If the $\Thetav_{s,\alpha}$ each have sufficient posterior probability for moderate $\alpha$, then the corresponding $\postp$ values will be too large to reject. Nevertheless, if the $\Thetav_{s,\alpha}$ also have small overlap, then the nominal value of $\jointp$ can be vanishingly small. In such a case, the bound \eqref{eq:frequency_bound_joint} may still be sufficient to reveal the lack of fit. This situation is illustrated in Figure \ref{fig:joint_schematic}.

It is also instructive to compare the joint $p$-value to the sampled $p$-value. Conceptually, both $p$-values exploit the variation in model fit to statistic(s) $T$ across values of the parameter $\thetav$ to increase the power of the resulting test. While the joint $p$-value benefits from additional information (in the form of additional statistics), the bound \eqref{eq:frequency_bound_joint} may be conservative compared to the exact calibration of the sampled $p$-value. Consequently, the sampled $p$-value may have higher power than the joint $p$-value for some problems.

In such cases, we can profitably combine these approaches, obtaining the sampled joint $p$-value $\jointp(\yv,\thetav)$ defined in \eqref{eq:joint_pval_sampled}. The sampled joint $p$-value is computed with respect to $p(\yrep\mid\thetav)$ for some $\thetav\sim p(\thetav\mid\yv)$. When $\yv$ is drawn from the prior predictive distribution $p(\yv)$, the cumulative distribution function of the sampled joint $p$-value is exactly the $F(t)$ in the bound \eqref{eq:frequency_bound_joint}. Since estimating $F$ represents the majority of the computational burden in calculating this bound, the sampled joint $p$-value is generally no faster to compute than the bound on the ordinary joint $p$-value. However, by exploiting the skew of the distribution of \eqref{eq:joint_conditional} and obviating the need to optimize the integral in \eqref{eq:frequency_bound_joint}, the sampled joint $p$-value can have substantially higher power than the joint (and sampled) $p$-value alone. We explore such a case in the second example of Section \ref{sec:experiment}.
\subsection{Computation and interpretation of $p$-values}

We now turn to a comparison of $\calp$, $\partp$, $\sampp$, and $\jointp$ in terms of ease of use and interpretive power for model rejection. For our comparison of computational difficulty, we focus only on $\partp$ and $\jointp$ since $\sampp$ generally poses no computational challenges and $\calp$ is usually computationally infeasible in practice.

\subsubsection{Computing $\partp$ and $\jointp$}

The nominal value of $\jointp$ can be estimated for any $\mathcal{T}$ in the same manner as $\postp$ is estimated for a single statistic. Because $\jointp$ will concentrate near $0$ as $d$ increases, the estimation of $\jointp$ may require a greater number of simulations from $p(\yrep\mid\yv)$ in order to resolve the order of magnitude to acceptable accuracy. However, in practice it often suffices to retain a fixed number of posterior draws $\thetav_i$ and take multiple draws from $p(\yv\mid\thetav_i)$ for each $i$. Thus, the increase in computational overhead from this step is usually modest.

The estimation of the corresponding frequency bound \eqref{eq:frequency_bound_joint} is more taxing, as we will usually not know the cumulative distribution function $F$ in closed form. However, this function can be estimated with inexpensive Monte Carlo simulations of the joint model. Algorithm \ref{alg:joint_p} describes the procedure, repeatedly estimating the empirical CDF of the random variable \eqref{eq:joint_conditional} conditional on $\thetav$ and then aggregating the results. In particular, we only sample from the prior and sampling distributions, never from the posterior, significantly speeding up the Monte Carlo operations compared to exact calibration.
Once we have our estimate $\hat{F}$, we can estimate the bound \eqref{eq:frequency_bound_joint} using quadrature and a grid search for the optimum of $\int_0^s\hat{F}(t)dt/ (s-\alpha)$. Since $\hat{F}$ is a one-dimensional function and $[\alpha,1]$ is compact, this last step can generally be performed very quickly.

\begin{figure}[t]
  \centering
  \begin{algorithm}[H]
    \small
    \caption{Joint $p$-value empirical CDF estimator}\label{alg:joint_p}
    \begin{algorithmic}[1]
      \Require{
        Observed data $\yv$, test statistics $\{T_s\}_{s=1}^S$, \# of prior samples $N_{\mathrm{prior}}$, \# of sampling distribution samples $M_{\mathrm{sampling}}$, \# of samples at which to estimate CDF $L_{\mathrm{estimate}}$.
      }
      
      \For{$n \gets 1,\ldots,N_{\mathrm{prior}}$}
      \State Sample $\{\thetav^{(n)}\} \sim p\left( \thetav \right)$.

      \State Sample $\left\lbrace\yrep^{(m)}\right\rbrace_{m=1}^{M_{\mathrm{sampling}}} \stackrel{iid}{\sim} p\left( \yrep\mid \thetav^{(n)} \right)$.
      
      \For{$l \gets 1,\ldots L_{\mathrm{estimate}}$}

      \State Compute $\hat{p}^{(n,l)} = \frac{1}{M_{\mathrm{sampling}}}\sum_{m=1}^{M_{\mathrm{sampling}}}\mathbbm{1}\left\lbrace T_s\left( \yrep^{(m)} \right) \geq T_s\left( \yrep^{(l)} \right) \text{ for all } 1\leq s\leq S\right\rbrace$
      
      \EndFor

      \State Compute $\hat{F}^{(n)} = q\mapsto \frac{1}{L_{\mathrm{estimate}}}\sum_{l=1}^{L_{\mathrm{estimate}}}\mathbbm{1}\left\lbrace \hat{p}^{(n,l)} \leq q \right\rbrace$.
      
      \EndFor

      \State Compute $\hat{F} = q \mapsto \frac{1}{N_{\mathrm{prior}}}\sum_{n=1}^{N_{\mathrm{prior}}}\hat{F}^{(n)}(q)$
      
      \State \Return $\hat{F}$.
    \end{algorithmic}
  \end{algorithm}
\end{figure}

 When the nominal observed value of $\jointp$ is small, we will need a high resolution estimate of $F$ near $0$ in order to accurately estimate the optimum, and this can require large values of $M_{\mathrm{sampling}}$ and $L_{\mathrm{estimate}}$. Because the complexity of the algorithm scales as
\[
N_{\mathrm{prior}}\times M_{\mathrm{sampling}}\times L_{\mathrm{estimate}},
\]
the cost of estimating $F$ will almost always dominate the computation. In practice, this cost can be substantially reduced by taking advantage of the fact that the computation can be carried out in parallel over the samples $\{\thetav^{(n)}\}$.

The reader may also notice that the $\hat{p}^{(n,l)}$ in Algorithm \ref{alg:joint_p} are not independent across $1\leq l\leq L_{\mathrm{estimate}}$. While this does introduce correlation in the errors, the estimator remains asymptotically unbiased. In particular, \citet{BarbeKendall} showed under weak regularity conditions that the $\sqrt{L_{\mathrm{estimate}}}\left(\hat{F}^{(n)} - F^{(n)}\right)$ converge in distribution to centered Gaussian processes, where $F^{(n)}$ is the CDF of \eqref{eq:joint_conditional} conditional on $\thetav_n$.

The computation of $\partp$ is simpler but more subtle. In all but the simplest cases, we must sample from the partial posterior predictive distribution $p(\yrep \mid \yv\setminus T(\yv))$ in order to estimate $\partp$. This will usually be achieved by sampling first from the partial posterior $p(\thetav\mid\yv\setminus T(\yv))$, which can be done either through direct simulation or by importance resampling draws from the total posterior with the unnormalized weights $1/p(T(\yv)\mid \thetav)$. Whatever our strategy for sampling the partial posterior, we will generally need an estimate of $p(T(\yv)\mid\thetav)$, as this will only be available analytically for the simplest models and test statistics.

In \citet{BayarriBerger00}, it is recommended that kernel density estimation can be applied when the sampling distribution of $T$ is unknown. In theory the required simulation is straightforward, since sampling from $p(T(y)\mid\thetav)$ is as simple as sampling from $p(\yv\mid\thetav)$ and computing $T$ on each sample. Like computing the bound for $\jointp$, this requires a double simulation whereby we first sample $N_{\mathrm{post}}$ values of $\thetav_n\sim p(\thetav\mid\yv)$, and then sample $M_{\mathrm{sampling}}$ values of $\yv_m\sim p(\yv\mid\thetav_n)$ for each $1\leq n\leq N_{\mathrm{post}}$. In practice, we often take $N_{\mathrm{post}}$ much smaller than $N_{\mathrm{prior}}$, but sampling once from $p(\thetav\mid\yv)$ can be much more expensive than sampling once from $p(\thetav,\yv)$.

The greater difficulty in computing $\partp$ is in the need for potentially intractably large values of $M_{\mathrm{sampling}}$. This occurs for instance when the observed value of $T(\yv)$ lies in the tail of the distribution $p(T(\yv)\mid\thetav)$ for some values of $\thetav$ which are probable under the posterior. Because our sampling will generally depend on the inverse of this density, estimating these tails accurately can be essential to avoid explosively large weights. However, kernel density estimation is extremely inefficient in the tails and can systematically underweight tail probabilities with commonly used kernels. Various strategies may be available to stabilize the tail estimation, but we are not aware of any general methods that can succeed reliably without further assumptions or information about the underlying distribution.

\subsubsection{Interpreting $p$-values}

The (approximately) calibrated $p$-values and joint predictive $p$-values face different trade-offs in interpretation. The frequency bound \eqref{eq:frequency_bound_joint} will always be more conservative than the exactly calibrated $p$-value. And as we will see in Section \ref{sec:experiment}, computational intensity tends to trade off with the conservativity of the corresponding $p$-value.  However, the bound \eqref{eq:frequency_bound_joint} holds in total generality and makes no assumptions about asymptotics or exchangeability. We regard this as a substantial benefit of $\jointp$, as the availability of interpretable frequencies is the key property of any model rejection tool.

The asymptotic uniformity of $\partp$ allows it to be interpreted directly (without intermediate bounds) as a frequency in sufficiently nice cases, but this interpretation is limited both by the applicability of the asymptotic model as well as our ability to judge whether we have sufficient data to reliably use the asymptotic approximation. For instance, the asymptotic model \eqref{eq:asymp_model} for $\partp$ assumes both conditional independence as well as a shared parameter vector of fixed dimension. This framework is violated by models parametrized by a vector which grows in dimension with the data (e.g. local parameters in hierarchical models and HMM hidden states), and by models with non-independent sampling distributions (e.g. moving average models).

Furthermore, when our workflow assumption is satisfied, we anticipate that the dimension of the parameter vector will increase as the modeling process proceeds. Consequently, even if the asymptotic assumption appears potentially valid in our initial models, the process of model expansion erodes that validity. Since we were motivated by the problem of finding model rejection tools which are robust in the setting of model expansion, this issue is particularly concerning.

The sampled $p$-value and joint sampled $p$-value both have the beneficial property of being automatically or easily calibrated even preasymptotically. This provides an interpretational advantage over the bound \eqref{eq:frequency_bound_joint}, which can only constrain the underlying probability. However, since both $\sampp(\yv,\thetav)$ and $\jointp(\yv,\thetav)$ are random $p$-values, they are not a direct function of the model and the data, and thus the relationship between any given value of these $p$-values and the true fit of the model to the data is somewhat ambiguous.


\section{Validating $\jointp$ with non-positively associated extremes}\label{sec:copulas}

We now examine the behavior of $\jointp$ under different assumptions about how the exceedance events $\{T_s(\yrep) \geq T_s(\yv)\}$ are associated under the posterior predictive distribution $p(\yrep\mid\yv)$ and the sampling distributions $p(\yrep\mid\thetav)$. We test the behavior of our bound \eqref{eq:frequency_bound_joint} under two conditions: an easier condition in which our test statistics are non-negatively associated under the sampling distributions $p(\yv\mid\thetav)$, which we can study with exact computations, and a harder condition in which the test statistics can be non-positively associated, which we study with simulation experiments using a parametric model for the copula of the statistics.

\subsection{An easier case: non-negatively associated test statistics}

Our main purpose is to establish quantitative evidence under reasonable assumptions for the intuition given after Theorem \ref{thm:frequency_bound_joint}, viz., that our frequency bound \eqref{eq:frequency_bound_joint} will in fact shrink to $0$ as $d\to\infty$ when our extreme exceedances are not positively associated under the posterior predictive and have corresponding marginal $p$-values which are not too large. First, we must establish precisely what we mean by non-positively and non-negatively associated test statistics. To do this, we first generalize our definition of $F$ from Theorem \ref{thm:frequency_bound_joint} to arbitrary random variables. If $(Y_1,\ldots,Y_d)$ are random variables with joint cumulative distribution function $\Phi(y_1,\ldots,y_d)$, then the cumulative distribution function $F_{\Phi}\left( t \right)$ of the random variable $\Phi(Y_1,\ldots,Y_d)$ is the Kendall function associated to the distribution $\Phi$.

If we denote the joint CDFs associated to $(-T_1,\ldots,-T_d)$ under $p(\yv\mid\thetav)$ by $\Phi_{\thetav}$, then $F(t) = \E_{p(\thetav)}F_{\Phi_{\thetav}}(t)$, and we can study the behavior of our frequency bound \eqref{eq:frequency_bound_joint} by studying the Kendall functions $F_{\Phi_{\thetav}}(t)$. (We negate the test statistics in constructing the Kendall function simply to keep the inequality direction consistent with \eqref{eq:frequency_bound_joint}, but nothing of importance is changed since this direction is arbitrary.) Furthermore, if $F_{\Phi_1},F_{\Phi_2}$ are two Kendall functions, then $\Phi_2$ is larger than $\Phi_1$ in positive $K$-dependence order ($\Phi_1 \prec_{\mathrm{PKD}} \Phi_2$) if $F_{\Phi_1}(t) \geq F_{\Phi_2}(t)$ for all $t\in [0,1]$ \citep{PKD}.

To see that this ordering is related to the dependence structure of the distributions $\Phi$, note that if the corresponding random variables $\{Y_i\}_{i=1}^d$ have medians $m_1,\ldots,m_d$ and are strongly positively associated, then there is a stronger probability of all the $Y_i$ lying on the same side of their medians, and $\Phi\left(m_1,\ldots,m_d\right)$ can be relatively large. By contrast, if the $\{Y_i\}_{i=1}^d$ are strongly negatively associated, then we would expect some $Y_i$ to be larger than their medians $m_i$ and some to be smaller, resulting in much smaller $\Phi\left( m_1,\ldots,m_d \right)$. Combining these ideas, we expect $F_{\Phi}(t)$ to increase more rapidly when the $\{Y_i\}_{i=1}^d$ are negatively associated and more slowly when they are positively associated.

This idea can be formalized somewhat by noting that $\Phi_1\prec_{\mathrm{PKD}}\Phi_2$ implies that $\tau_1 \leq \tau_2$, where $\tau_i$ is the value of Kendall's tau associated to $\Phi_i$. In two dimensions, Kendall's $\tau$ is a form of rank correlation and is given by the formula
\[
  \tau = \E\mathrm{sign}\left[\left( Y_1-Y_1' \right)(Y_2-Y_2')\right],
\]
where $(Y_1,Y_2),(Y_1',Y_2')\stackrel{iid}{\sim}\Phi$. This definition can be generalized to higher dimensions, where it measures an overall level and direction of association between the random variables $\{Y_i\}_{i=1}^d$ \citep{JoeConcordance}.

We will say simply that $(Y_1,\ldots,Y_d)\sim\Phi$ are positive $K$-dependent if $F_{\Phi} \prec_{\mathrm{PKD}} F_{\Psi}$, where $\Psi$ is the joint CDF corresponding to independent random variables $U_1,\ldots,U_d$. Because the Kendall function $F_{\Psi}$ is independent of the marginal distributions of the $U_i$, we may take them to be uniform on $[0,1]$. This simplification allows for direct calculation of $F_{\Psi}$, which is shown in \citet{BarbeKendall} to be given by the formula
\begin{equation}
  \label{eq:independent_kendall}
  F_{\Psi}(t) = t\left[ 1 + \sum_{i=1}^{d-1}\frac{\log(1/t)^i}{i!} \right].
\end{equation}

With these facts established, we can now see that if our test statistics are positively associated in the sense of being positive $K$-dependent under $p(\yv\mid\thetav)$ for each $\thetav$, then our frequency bound \eqref{eq:frequency_bound_joint} is further upper bounded by
\begin{equation}\label{eq:independent_bound}
\inf_{s\in[\alpha,1]}\frac{\int_0^s F_{\Psi}(t)dt}{s-\alpha}.
\end{equation}
We study the behavior of this bound under the following assumptions on our posterior predictive $p$-values:
\begin{enumerate}
\item The exceedances for our test statistics $T_s$ are non-positively associated:
  \begin{align}
  \label{eq:nonpos_ppv}
    \P&\left( T_s(\yrep) \geq T_s(\yv) \mid T_1(\yrep) \geq T_1(\yv),\ldots,T_{s-1}(\yrep) \geq T_{s-1}(\yv)  \right)\nonumber\\
    &\hspace{7cm}\leq \P\left( T_s(\yrep) \geq T_s(\yv) \right),
\end{align}
\item The posterior predictive $p$-values are upper bounded by some number $p\in(0,1)$:
  \begin{equation}
    \label{eq:marginal_bound_pval}
    \postp_{T_s}\left( \yv \right) \leq p \text{ for all } 1\leq s\leq d.
  \end{equation}
\end{enumerate}
Under these conditions, we clearly have that $\jointp(\yv) \leq p^d$. In order to assess the performance of $\jointp$ under these assumptions, we plot the relation between \eqref{eq:independent_bound} and the marginal bound $p$ for dimensions $d=2,\ldots,10$ in Figure \ref{fig:test_level_independent}. For each value of the bound $p$, the level monotonically decreases as the number of test statistics increases. This suggests that, in this setting, the joint $p$-value is asymptotically successful in the sense that we will eventually reject the model if we have enough test statistics with non-positively associated observed extremal exceedances.

\begin{figure}[t]
  \centering
  \includegraphics[scale=0.7]{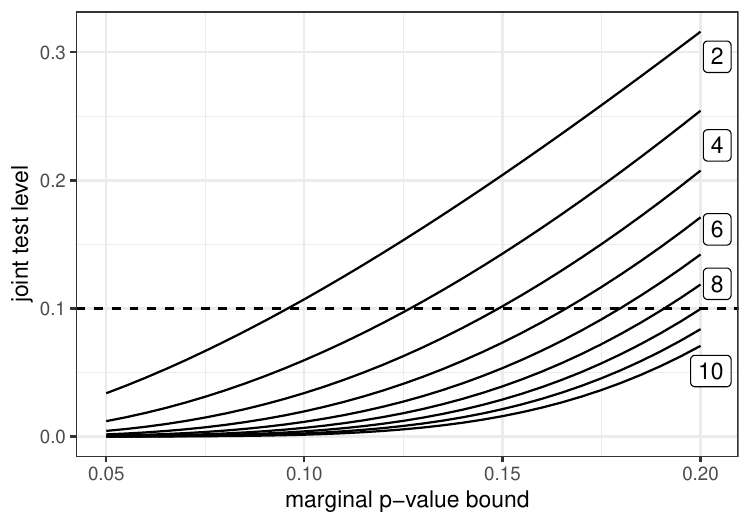}
  \caption{Upper bounds on the frequency of a nominal joint $p$-value $p^d$ for non-negatively associated exceedances versus the bound on the posterior predictive $p$-values of the test statistics separately for various numbers $d$ of test statistics. These curves can be interpreted as lower bounds on the level of a test that would always reject under our assumptions. This level decreases with the number of test statistics and with the bound on the marginal $p$-values.}
  \label{fig:test_level_independent}
\end{figure}

It is also apparent that the efficiency of this procedure depends strongly on the marginal $p$-value bound $p$. For smaller values of $p$, passing from two to three test statistics is sufficient to halve the resulting level of the test, but for larger values, the drop is less than a fifth. This may be particularly troubling since those cases where $p$ is larger are exactly those in which $\postp$ is least capable of model rejection. We note, however, that this represents a worst-case scenario in the sense that we make no assumptions about the gap in the inequalities \eqref{eq:nonpos_ppv}. In practice, when these exceedance events are non-positively associated, we usually observe nominal values of $\jointp$ smaller than $p^d$, and thus we can often obtain bounds that fall below the corresponding curve in Figure \ref{fig:test_level_independent}.

We also note that these conclusions continue to hold if we relax the assumption of positive $K$-dependence for all $\thetav$ to the assumption that
\[
  \E_{p(\thetav)}F_{\Phi_{\thetav}}(t) \leq F_{\Psi}(t) \text{ for } t < \epsilon
\]
for some $\epsilon >0$.
\subsection{A harder case: non-positively associated test statistics}

The results of the last section suggest that our proposed joint $p$-value works when our test statistics are non-negatively associated under the model in the sense that, when our exceedance events $\{T_s(\yrep) \geq T_s(\yv)\}$ are non-positively associated, the resulting frequency bound for $\jointp$ shrinks to $0$ as $d$ grows. The situation for $\jointp$ is harder, however, in situations when the test statistics are non-positively associated under the proposed model. In this case, smaller joint $p$-values are more common under the proposed model, so rejecting the model becomes harder.

Furthermore, investigating the behavior of $\jointp$ in this setting is more challenging since there is no general upper bound available for the average Kendall function $F$. Instead, we use a parametric model of the test statistics $\{T_s\}_{s=1}^d$ to investigate the performance of $\jointp$. In particular, we use a copula model, which specifies only the dependence structure of the statistics while leaving the marginal distributions arbitrary. As noted above, modeling only the dependence structure is possible since the Kendall function is independent of the marginal distributions.

\begin{figure}[t]
  \centering
  \includegraphics[scale=0.7]{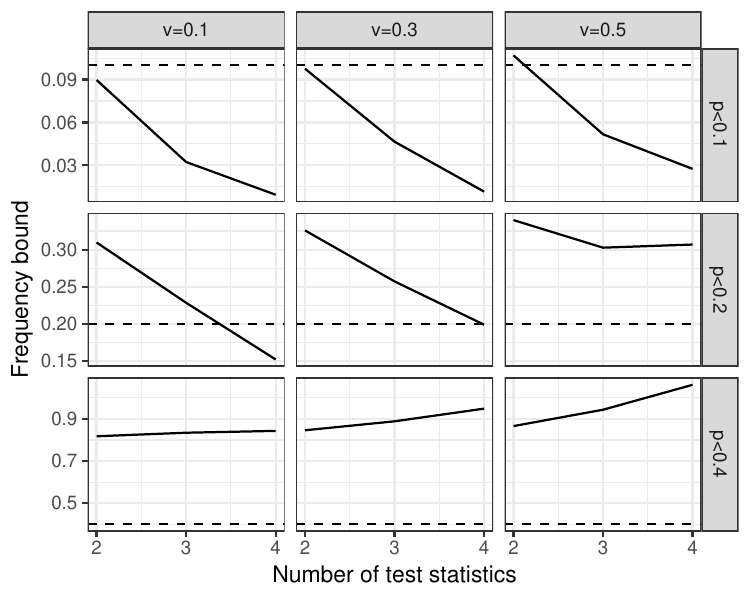}
  \caption{Upper bounds on frequency of joint $p$-value against dimension for negatively associated test statistics and for varying levels of negative dependence (columns) and varying bounds on the marginal posterior predictive $p$-values (rows).}
  \label{fig:bounds_negative}
\end{figure}

We need our copula model to be defined in arbitrarily many dimensions and to allow for modeling negative association between the test statistics, particularly in the tails of their distribution. The Gaussian copula is a natural choice in this setting since it is definable in any dimension and can be parametrized to represent negative association between each component variable. It is also critical that the Gaussian copula has zero tail dependence - a measure of the positive association between component variables when one takes an extreme value - since the behavior of the Kendall function near $0$ is particularly sensitive to these extremes. The $t$ copula, for instance, can be parametrized to represent negative overall association, but the resulting copulas always exhibit positive dependence in the tails \citep{JoeCopulas}. As a result, the corresponding Kendall functions are dominated by $F_{\Psi}$ near zero.

The Gaussian copula $\Phi^G$ is parametrized by a correlation matrix $R$, which we define as
\begin{equation}
  \label{eq:GC_cor}
  R_{ij} =
  \begin{cases}
    1, & i=j\\
    -v/(d-1), & i\neq j,
  \end{cases}
\end{equation}
where $d$ is the number of test statistics. When $v=1$, this is the minimum value that results in a valid correlation matrix $R$ when the off-diagonal entries are constant. In light of the above, we view this as a reasonably hard case for $\jointp$, since we assume that every pair of test statistics is negatively associated in the proposed model and we thus have that $\Phi^G \prec_{\mathrm{PKD}}\Psi$.

There is no known analytic expression for the Kendall function of the Gaussian copula, so we estimate it using the empirical CDF of $\Phi^G\left( U^{(i)}_1,\ldots,U^{(i)}_d \right)$, where $(U_1^{(i)},\ldots,U_d^{(i)})$ are Monte Carlo samples from $\Phi^G$. Plugging this estimate in for $F$ in \eqref{eq:frequency_bound_joint}, we can compute frequency bounds with $\alpha = p^d$ while varying the $\postp$ bound $p$ and the level of negative association $v$. Figure \ref{fig:bounds_negative} plots the resulting bounds against dimension $d=2,3,4$ for $p=0.1,0.2,0.4$ and $v=0.1,0.3,0.5$.

The effectiveness of $\jointp$ in this setting is now contingent on the combination of $p$ and $v$. In the first row, for $p=0.1$, the joint $p$-value continues to work well in the sense that the frequency bound decreases with $d$ fast enough to be significant at the $0.1$ level for just three test statistics and for all tested values of $v$. In the second row, for $p=0.2$, our bound falls with $d$ and is below the corresponding bound for $\postp$ for $v=0.1$ and $v=0.3$, indicating that the joint $p$-value is improving on our ability to reject the model. But for $v=0.5$ the bound actually increases with $d$, indicating that our proposed procedure is no longer able to use the negative association in the observed extremal exceedances to reject the model. When $p$ increases to $0.4$, we find that the frequency bound increases with $d$ regardless of the value of $v$. We conclude that $\jointp$ can still provide value for model rejection when our test statistics are negatively associated under the proposed model, but, for this to be possible and efficient, we need one of the following to hold:
\begin{enumerate}
\item The corresponding marginal $p$-values are not too large.
\item The test statistics are not too negatively associated on average under $p(\yv\mid\thetav)$.
\item The exceedance events $\{T_s(\yrep) \geq T_s(\yv)\}$ are sufficiently negatively associated under $p(\yrep\mid\yv)$ (i.e. the gap in the inequalities \eqref{eq:nonpos_ppv} must be sufficiently large).
\end{enumerate}

\subsection{Copula bounds and the sampled joint $p$-value}

The broad conclusions from these copula-based experiments continue to hold when the joint $p$-value is replaced by the sampled joint $p$-value. Since $\jointp(\yv)$ is just the expected value of the sampled joint $p$-value $\jointp(\yv,\thetav)$ under $p(\thetav\mid\yv)$, the upper bound $p^d$ for $\jointp$ can be interpreted as an upper bound on the average of $\jointp(\yv,\thetav)$. We may still hope that, due to the skewness of the distribution of $\jointp(\yv,\thetav)$ under $\thetav\sim p(\thetav\mid\yv)$, the median of $\jointp(\yv,\thetav)$ over $p(\thetav\mid\yv)$ could tend to zero faster than the mean $\jointp(\yv)$, and could thus achieve better performance after calibration. However, this is only possible if the distribution of $\jointp(\yv,\thetav)$ has a sufficiently heavy right tail under $p(\thetav\mid\yv)$, as measured by the coefficient of variation $\mathsf{cv}=\sigma/\mu$, where $\mu$ and $\sigma$ are the mean and standard deviation of this distribution. If $\mathsf{cv} < C$ for some fixed, finite $C > 0$ for all choices of test statistics, then the Paley-Zygmund inequality implies a nonvanishing probability of obtaining sampled joint $p$-values at least a constant fraction of the mean. In particular, we have for $\delta\in (0,1)$ that
\begin{equation}
  \label{eq:copula_pz}
  \P_{p(\thetav\mid\yv)}\left[ \jointp(\yv,\thetav) \geq (1-\delta)\postp(\yv) \right] \geq \frac{1-\delta}{C+1-\delta}.
\end{equation}
 
Thus, we do not generally expect typical sampled joint $p$-values to tend to zero at a faster exponential rate than the joint $p$-value. To analyze the \textit{calibrated} sampled joint $p$-values, the bounds \eqref{eq:frequency_bound_joint} would be replaced by the evaluated cumulative distribution functions $F(\jointp(\yv,\thetav))$ in Figures \ref{fig:test_level_independent} and \ref{fig:bounds_negative}. These are smaller than the corresponding bounds (since they avoid the optimization step in \eqref{eq:frequency_bound_joint}), but generally only by a factor of about $2$. Thus, while $\jointp(\yv,\thetav)$ is usually smaller than $\jointp(\yv)$, the trends established in this section - particularly the relationship between $p$-value performance and dependence among test statistics - generally apply to $\jointp(\yv,\thetav)$ as well.

\section{Comparing Bayesian $p$-values in simulation examples}\label{sec:experiment}

We now turn to a comparison between $\jointp$, $\partp$, $\sampp$, and $\calp$ in two simple simulation examples. In the first example, we examine a case where the joint $p$-value is more powerful than less computationally expensive alternatives while getting close to the power of approaches which are much more computationally costly. In the second example, we present a case that is challenging for the joint $p$-value and demonstrate that the sampled joint $p$-value can consistently dominate both $\jointp(\yv)$ and $\sampp(\yv,\thetav)$.

\subsection{Testing the quantiles of a beta population}

We take our observed data to be a random sample of size $N=100$ from a $\mathsf{beta}\left( 1,1.5 \right)$ distribution. We will assume a misspecified model with a $\mathsf{beta}\left( \theta,\theta \right)$ sampling distribution and a uniform prior over $[0,3]$. While our assumed sampling distribution is symmetric for all values of $\theta$, the true data generating distribution is substantially skewed to the right, as shown in Figure \ref{fig:beta_dists}. We test our assumed model with statistics $T_1$ and $T_2$ taken to be the $0.05$ and $0.95$ sample quantiles respectively.

\begin{figure}[t]
  \centering
  \includegraphics[scale=0.5]{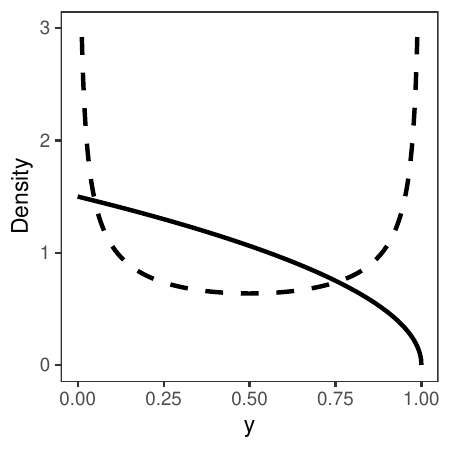}
  \includegraphics[scale=0.5]{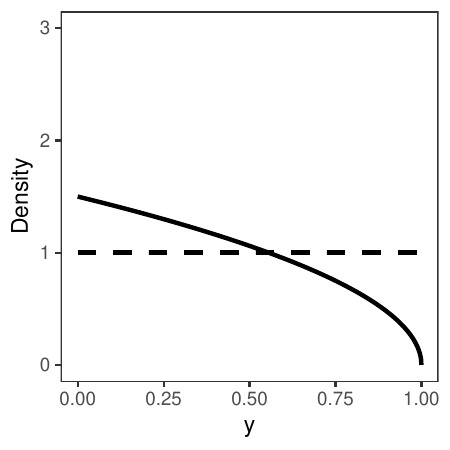}
  \includegraphics[scale=0.5]{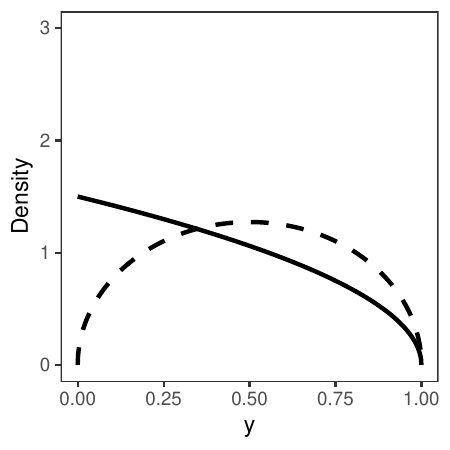}
  \includegraphics[scale=0.5]{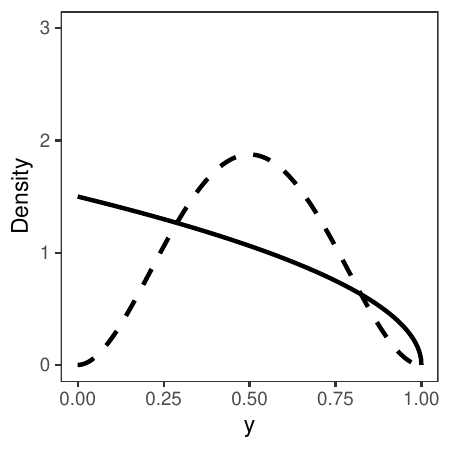}
  \caption{Solid: the $\mathsf{beta}(1,1.5)$ distribution from which our observed data was generated. Dashed: our assumed $\mathsf{beta}(\theta,\theta)$ sampling distribution for $\theta = 0.5,1,1.5,3$.}
  \label{fig:beta_dists}
\end{figure}

Qualitatively, values of $\theta$ closer to zero yield sampling distributions that better match the observed lower quantile but overshoot the observed upper quantile. Similarly, larger values of $\theta$ better match the observed upper quantile but now overshoot the observed lower quantile. If the posterior splits the difference between these regions of parameter space, then we should expect that both observed quantiles will be lower than what is typical in posterior predictive replications from the assumed model. Indeed, this is precisely what we see when we compute the probabilities of $T_s(\yrep)\leq T_s(\yv)$ for $s=1,2$, yielding posterior predictive $p$-values of $\approx 0.07$ for both test statistics.

Computing our bound for $\jointp$ and $\calp$ requires estimating distribution functions of certain exceedance probabilities, and computing $\partp$ requires estimating the partial posteriors \eqref{eq:partial_posterior} for $T_1$ and $T_2$. Figure \ref{fig:joint_sim_check} displays the estimated distribution function $\hat{F}$ around the nominal joint $p$-value $\alpha = 0.0028$ along with the optimization objective $\int_{\alpha}^s \hat{F}(s) / (s-\alpha)$. This shows in particular that we have estimated the distribution $F$ to sufficient resolution around $\alpha$ to trust our estimated bound. Plots of estimated intermediate quantities for $\partp$ and $\calp$ are given in the appendix.

\begin{figure}[t]
  \centering
  \includegraphics[scale=0.4]{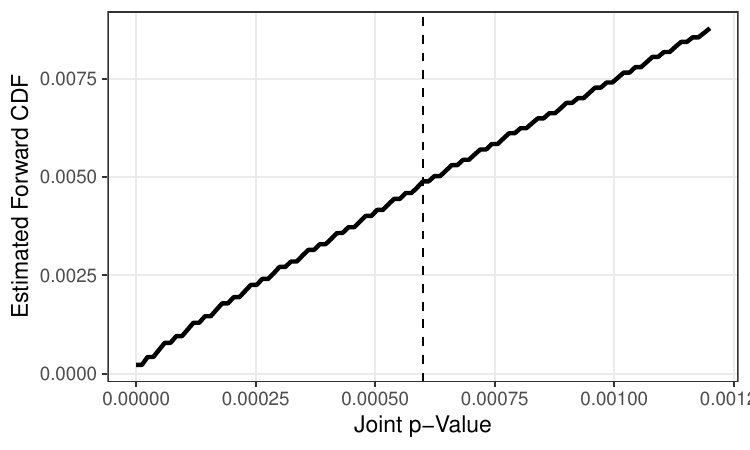}
  \includegraphics[scale=0.4]{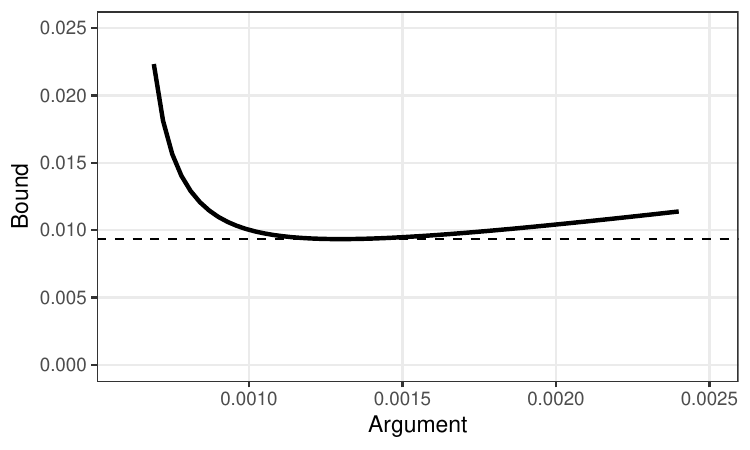}
  \caption{Left: the estimated distribution function of joint extremal exceedances \eqref{eq:joint_conditional} with vertical line indicating the nominal joint $p$-value. Right: the optimization objective on the right-hand side of \eqref{eq:frequency_bound_joint} for a range of $s$ with horizontal line at the optimum.}
  \label{fig:joint_sim_check}
\end{figure}

Table \ref{fig:simulation_results} displays estimates of the various candidate $p$-values for this problem. Because the nominal joint $p$-value is two orders of magnitude smaller than either of the $\postp$ values, we achieve a frequency upper bound which matches the smaller of the two partial $p$-value and calibrated $p$-values. 
Furthermore, while $\jointp$ automatically controls for multiple testing by accounting for dependence, the other results have somewhat artificially higher power due to a lack of multiple testing adjustment. 


\begin{figure}[t]
  \begin{center}
    \begin{tabular}{|c|c|c|c|c|}
      \hline $\postp_{T_s}$ (bound) & $\jointp$ (bound) & $\partp_{T_s}$ & median $\sampp_{T_s}$ & $\calp_{T_s}$\\\hline
      $0.067\; (0.135), 0.027\; (0.055)$ & $0.0006\; (0.009)$ & $0.021,0.009$ & $0.039,0.018$ & $ 0.025, 0.012$ \\\hline
    \end{tabular}
  \end{center}
  \caption{Candidate $p$-values and corresponding bounds (in parentheses, where applicable) for $T_1$ and $T_2$ equal to the $0.05$ and $0.95$ sample quantiles. The partial and calibrated $p$-value give strongest evidence against the model, followed by the joint $p$-value, median sampled $p$-value, and posterior predictive $p$-value.}
  \label{fig:simulation_results}
\end{figure}

Compared to the sampled $p$-value, our frequency bound is less than median sampled $p$-value for either statistic. More importantly, the sampled $p$-value is a random quantity and is larger than our frequency bound in more than $66\%$ of samples for either statistic. Figure \ref{fig:sampled_pvals} displays the survival function of the sampled $p$-values for $T_1$ and $T_2$ along with various other $p$-values, showing that the sampled $p$-value is only less conservative on average compared to the posterior predictive $p$-value. Thus, while the sampled $p$-value is much easier to compute than our bound, it is more conservative in this problem.

\begin{figure}[t]
  \centering
  \includegraphics[scale=0.5]{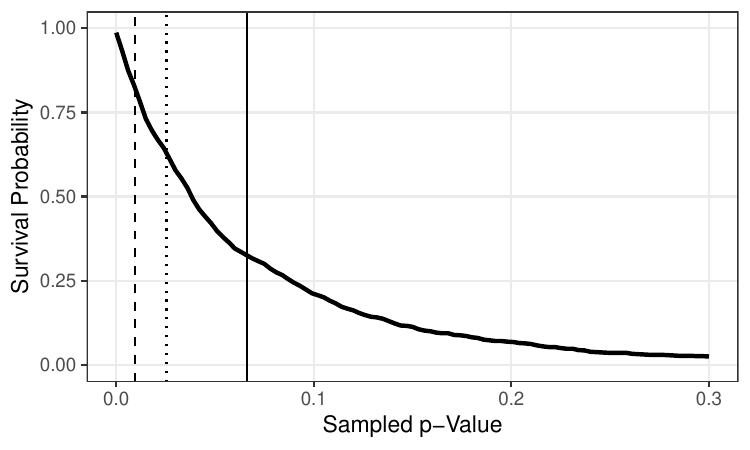}
  \includegraphics[scale=0.5]{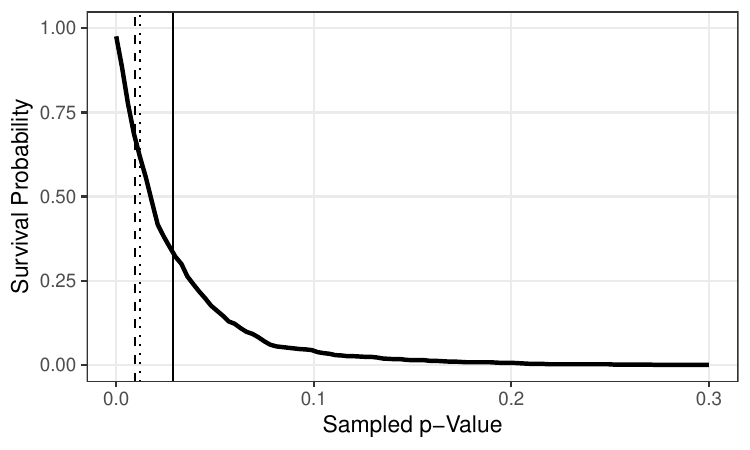}
  \caption{Survival functions for the sampled $p$-value computed for test statistics $T_1$ and $T_2$ equal to the $0.05$ and $0.95$ sample quantiles respectively. The solid, dashed and dotted lines represent the corresponding posterior predictive, joint, and calibrated $p$-values respectively. The sampled $p$-value is less conservative than $\postp$ on average in this problem, but more conservative than $\jointp$ and $\calp$ on average.}
  \label{fig:sampled_pvals}
\end{figure}

Unsurprisingly, at about a sixth the magnitude, the joint $p$-value bound is a substantial improvement over the corresponding frequency bounds for the individual posterior predictive $p$-values. Overall, $\jointp$ displays a useful balance of trade-offs in this problem. It is substantially less conservative than the alternatives which are easier to compute, and it comes within a factor of two of the more powerful alternatives while being easier to compute and offering preasymptotic guarantees.

\subsection{Testing for large effects in a linear regression model}

We now consider a conjugate Bayesian linear regression model of the following form.

\begin{equation}
  \label{eq:reg_setup}
  \yv\mid \betav \sim \mathsf{normal}\left( \Xm\betav,\sigma^2\mathbf{I} \right), \quad \betav\sim\mathsf{normal}\left( \mathbf{0},\boldsymbol{\Sigma} \right),
\end{equation}
where $\sigma^2$ and $\boldsymbol{\Sigma}\in\RR^{d\times d}$ are taken to be fixed hyperparameters for simplicity, and $\Xm\in\RR^{n\times d}$ is known matrix of covariates. We construct the prior covariance as
\begin{equation}
  \label{eq:sigma_def}
  \boldsymbol{\Sigma}_{ij} =
  \begin{cases}
    1, & i=j\\
    \rho, & \{i,j\} = \{1,2\}\\
    0, & \text{otherwise}
  \end{cases}.
\end{equation}

In other words, $\boldsymbol{\Sigma}$ is the identity matrix except for the $(1,2)$ and $(2,1)$ entries, which encode a possibly nonzero prior correlation between $\betav_1$ and $\betav_2$. We test the fit of this model to data generated from the following process.
\begin{equation}
  \label{eq:reg_dgp}
  \yv\mid \betav^* \sim \mathsf{normal}\left( \Xm\betav^*,\sigma^2\mathbf{I} \right),\quad \{\Xm_{ij}\} \stackrel{iid}{\sim} \mathsf{normal}(0,1),
\end{equation}
where $[\betav^*]_j = 4$ if $j=1,2$ and $[\betav^*]_j = 1$ if $j\geq 3$. The model \eqref{eq:reg_setup} is misspecified for this data generating process in the sense that values of $\betav_j$ as large as $4$ are extremely rare under the prior covariance \eqref{eq:sigma_def}.

In our experiment, we generate $n=200$ observtions from $d=100$ covariates. To test the fit of the model \eqref{eq:reg_setup} to datasets generated from the process \eqref{eq:reg_dgp}, we employ test statistics $T_1(\yv) = \Xm_1^T\yv$ and $T_2(\yv) = \Xm_2^T\yv$, where $\Xm_j$ is column $j$ of $\Xm$. In particular, we compare the sampled $p$-value with respect to $T_1$, the joint $p$-value with respect to $\{T_1,T_2\}$, and the sampled joint $p$-value with respect to the same statistics. We are particularly interested in the relative performance of these $p$-values as we vary the correlation hyperparameter $\rho$ over the interval $[0,-0.8]$.

The law of total covariance implies that
\begin{align}
  \label{eq:cov_decomp}
  \mathrm{Cov}\left( T_1(\yrep),T_2(\yrep)\mid \yv \right) &= \E\left[\mathrm{Cov}\left( T_1(\yrep),T_2(\yrep)\mid \betav \right)\mid\yv\right]\nonumber \\
  &\hspace{1cm}+ \mathrm{Cov}\left( \E\left[T_1(\yrep)\mid\betav\right],\E\left[T_2(\yrep)\mid\betav\right]\mid \yv\right)
\end{align}

The first term on the right-hand side affects the slope of the distribution function $F(t)$ in the bound \eqref{eq:frequency_bound_joint} near $0$ (for the reasons discussed in Section \ref{sec:copulas}), whereas the second term affects the magnitude of the ratio of $\jointp_{\{T_1,T_2\}}$ to the $\postp_{T_i}$ (since, e.g., we expect a negative covariance between the conditional expectations to increase the gap in \eqref{eq:nonpos_ppv}). As the computations in Section \ref{sec:copulas} demonstrated, the joint $p$-value bound is most powerful when it can exploit a combination of relatively slowly increasing $F(t)$ and a very small value of $\jointp$, as is most likely to occur when the first covariance on the right-hand side of \eqref{eq:cov_decomp} is positive and the second is negative.

With our model \eqref{eq:reg_setup} and our chosen test statistics, these two covariances are given by $\sigma^2[\Xm^T\Xm]_{12}$ and
\begin{equation}\label{eq:cond_exp_cov}
\sigma^2[\Xm^T\Xm\left( \Xm^T\Xm + \boldsymbol{\Sigma}^{-1} \right)^{-1}\Xm^T\Xm]_{12},
\end{equation}
respectively. When the influence of the prior is not too strong, the latter approaches the former. And in our simulation experiments, these covariances always had the same sign when $\rho=0$, leading to either relatively large values of $\jointp$ or relatively steep distribution functions $F(t)$. This is thus a particularly challenging problem for $\jointp$. Decreasing $\rho$ introduces additional negative correlation between $\betav_1$ and $\betav_2$ in the posterior, which allows for cases in which $[\Xm^T\Xm]_{12}$ is positive but \eqref{eq:cond_exp_cov} is negative and creates more favorable conditions for $\jointp$.

\begin{figure}[h]
  \centering
  \includegraphics[scale=0.66]{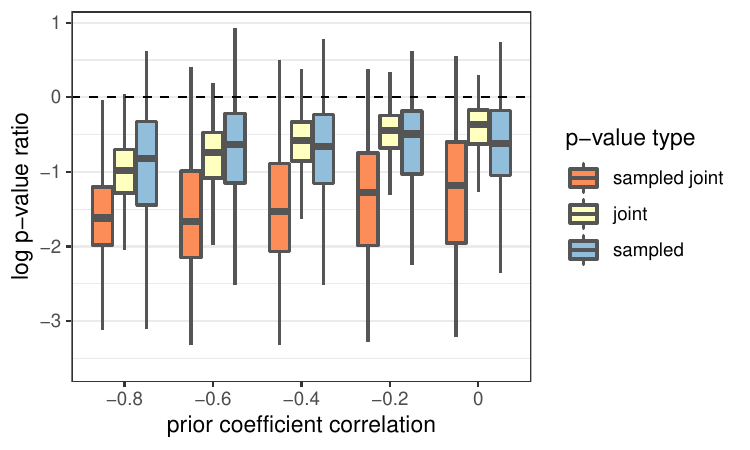}
  \caption{Box plots of the log ratio of candidate $p$-values to twice the posterior predictive $p$-value under repeated sampling of the regression model \eqref{eq:reg_dgp}.}
  \label{fig:reg_ex_comparison}
\end{figure}

Figure \ref{fig:reg_ex_comparison} displays box plots for the log ratio of the joint, sampled, and sampled joint $p$-values to twice the posterior predictive $p$-value as a function of $\rho \in [0,-0.8]$. Normalizing by $2\postp$ in this way adjusts for random variation in the fitness of the model \eqref{eq:reg_setup} to test statistics sampled from \eqref{eq:reg_dgp}, allowing for a direct comparison of relative performance. Values below the dashed line at $0$ indicate $p$-values smaller than $2\postp$. For $\rho$ near $0$, $\jointp(\yv)$ tends to perform worse than $\sampp(\yv,\betav)$ while exhibiting substantially lower variance, reflecting the relative lack of joint posterior information for $\jointp(\yv)$ to exploit. However, as we expect from the above discussion, the performance of $\jointp$ is substantially more sensitive to decreasing $\rho$ than the performance of $\sampp$. Consequently, for $\rho \leq -0.6$, the median $\jointp$ is eventually smaller than the median $\sampp$.

On the other hand, we see that the sampled joint $p$-value $\jointp(\yv,\betav)$ is able to dominate both $\jointp(\yv)$ and $\sampp(\yv,\betav)$ across all values of $\rho$, with a median value at least half an order of magnitude smaller than the median sampled or joint $p$-values in each case. This makes sense, since combining these strategies can attenuate their individual weaknesses. Compared to $\jointp(\yv)$, the sampled joint $p$-value is not as sensitive to skewness in the distribution of \eqref{eq:joint_conditional} and can be calibrated exactly rather than relying on bounds. Compared to $\sampp(\yv,\betav)$, the sampled joint $p$-value can exploit negative posterior predictive dependence between test statistics to obtain more consistently small $p$-values across all values of $\betav$, resulting in reduced variance compared to $\sampp$.

\section{Discussion and future work}\label{sec:discuss}

The limitations of the posterior predictive $p$-value for purposes of model rejection have been established in the literature since at least \citet{PostP} - nearly thirty years prior to the present time of writing. And, as we have discussed, many alternatives with greater power and lesser conservativity have been proposed in the intervening time. Yet, as far as we can tell, none of these has succeeded in establishing itself as widely used, recommended, and implemented. In light of this, the reader might question the productivity of proposing another alternative to $\postp$.

Indeed, we are sympathetic to this viewpoint, and we have therefore aimed to shape our new proposal around a diagnosis of the present state of relative stasis. Disagreement over the purpose of model checking may contribute significantly to this stasis insofar as it limits our ability to form a consensus around any particular set of tools. Some have rejected the sort of model checking that we consider here as fundamentally anti-Bayesian due to the way the data are used to update our beliefs outside of the posterior distribution. Others who see a role for frequentist considerations in Bayesian modeling nevertheless have argued for prioritizing the discovery goal of model checking, deemphasizing the limitations of common tools like $\postp$ for the rejection goal.

A more subtle reason that alternatives to $\postp$ have not been widely adopted may be a tension that limits their usefulness over $\postp$ in practice. In particular, $\condp$ and $\partp$ maximize their power and have known frequency properties only asymptotically. And as we showed, the conservativity of $\postp$ and $\sampp$ increases as model complexity increases, as in process of model expansion. In other words, the existing alternatives to $\postp$ are most useful when $\postp$ is the least problematic. By estimating non-asymptotic frequency bounds (or exact frequencies for the sampled joint $p$-value), our proposal allows consistent frequentist interpretations across modeling contexts. Furthermore, by combining multiple statistics, we have shown that we can scale the power of $\jointp$ with the complexity of the model.

In response to disagreements over the purpose of model checking, we again emphasize that model checking is necessarily a big tent containing distinct goals, no one of which can universally take priority over the others. We view $\jointp$ as a tool specialized to the goal of model rejection, and thus as a complement rather than a substitute for tools already widely in use, which may be better suited for other model checking tasks such as the discovery of alternative models.

The practical applicability of $\jointp$ may be limited when the computations involved become overly burdensome. When the nominal joint $p$-value \eqref{eq:joint_pval} is extremely small, estimating the CDF $F$ in the bound \eqref{eq:frequency_bound_joint} to high accuracy around this nominal value may become very difficult. Particularly troubling is the fact that our Algorithm \ref{alg:joint_p} may spend substantial resources estimating $F$ globally whereas we normally only require an estimate near $0$ to obtain a good upper bound on \eqref{eq:frequency_bound_joint}. Thus, finding a means of reducing this seemingly extraneous computation to increase the efficiency of estimating our bound be a useful direction for future work.

\bibliographystyle{plainnat}
\bibliography{joint_pvals}

\appendix

\section{Proof of Theorem \ref{thm:frequency_bound_joint}}\label{app:bound_proof}

\begin{proof}
  Replacing $\postp$ by $\jointp$, the inequality in \eqref{eq:conservative} shows that $\jointp$ is dominated in convex order by
  \begin{equation}
    \label{eq:conditional_joint}
    \E_{p(\yrep\mid\thetav)}\mathbbm{1}\left\lbrace T_1(\yrep) > T_1(\yv) \text{ and } T_2(\yrep) > T_2(\yv) \text{ and}\; \cdots\; T_d(\yrep) > T_d(\yv) \right\rbrace
  \end{equation}
  under $p(\yv,\thetav)$. However, unlike the case of $\postp$, \eqref{eq:conditional_joint} is no longer uniformly distributed. Nevertheless, the argument of Lemma 1 in \citet{PostP} extends directly. Let $G$ and $F$ be the cumulative distribution functions of $\jointp$ and \eqref{eq:conditional_joint} respectively. Then the established dominance in convex order is equivalent to the inequality
  \begin{equation}
    \label{eq:cumulative_ineq}
    \int_{s}^1 \left[ 1 - G(t) \right]dt \leq \int_s^1\left[ 1 - F(t) \right] dt
  \end{equation}
  for all $s \in [0,1]$. Next note that
  \begin{align}
    \int_0^1 \left[ 1 - G(t) \right] dt &= \E_{p\left( \yv,\yrep \right)}\mathbbm{1}\left\lbrace T_1(\yrep) > T_1(\yv) \text{ and } T_2(\yrep) > T_2(\yv) \text{ and}\; \cdots\; T_d(\yrep) > T_d(\yv) \right\rbrace\nonumber\\
    &= \int_{0}^1 \left[ 1 - F(t) \right] dt.\label{eq:expectation_equiv}
  \end{align}
  Combining \eqref{eq:cumulative_ineq} and \eqref{eq:expectation_equiv}, we get that
  \begin{equation}
    \label{eq:cumulative_ineq_2}
    \int_0^s G(t) dt \leq \int_0^s F(t) dt
  \end{equation}
  for all $s\in [0,\alpha]$. Next, since $G$ is nondecreasing, we can get for any $\alpha\in [0,1]$ the further bound
  \begin{equation}
    \label{eq:G_bound}
    \int_0^s G(t) dt = \int_0^\alpha G(t)dt + \int_\alpha^s G(t) dt \geq \int_0^\alpha G(t)dt + G(\alpha)(s-\alpha).
  \end{equation}
  For $\alpha \in [0,s]$, rearranging and combining with \eqref{eq:cumulative_ineq_2} gives
  \begin{equation}
    \label{eq:desired_bound}
    G(\alpha) \leq \frac{\int_0^sF(t) dt - \int_0^\alpha G(t) dt}{s-\alpha}\leq \frac{\int_0^sF(t) dt}{s-\alpha}.
  \end{equation}
  Optimizing over $s\in [\alpha,1]$ on the right then gives the desired bound.
\end{proof}

\section{Details of sample quantile simulation example}

For the calibrated $p$-values, the empirical CDF of $\postp$ is estimated using $2000$ prior predictive samples $\yrep$, and each $\postp(\yrep)$ is estimated using $1000$ samples from the posterior $p(\theta\mid\yrep)$. The estimated CDF is plotted in a neighborhood of the observed posterior predictive $p$-values in Figure \ref{fig:cal_plot}.

\begin{figure}[h]
  \centering
  \includegraphics[scale=0.5]{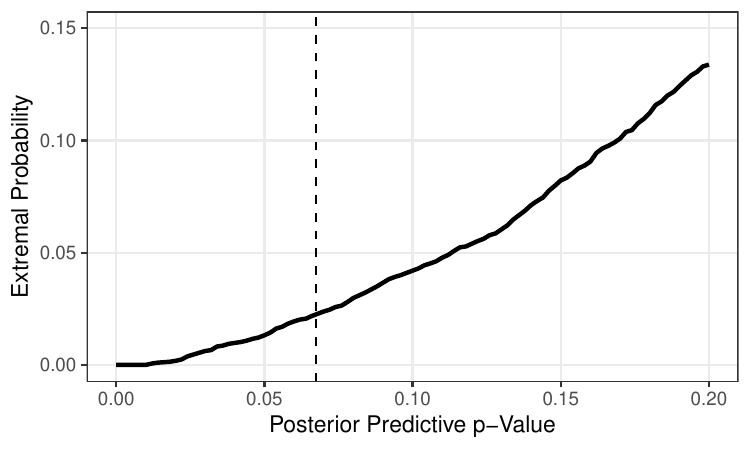}
  \includegraphics[scale=0.5]{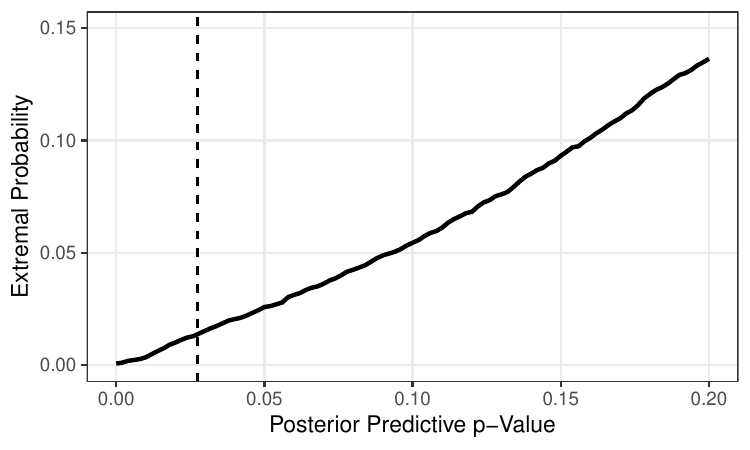}
  \caption{Empirical CDF of $\postp_T(\yrep)$ for $2000$ prior predictive draws of $\yrep$ and for $T$ equal to the $0.05$ (left) and $0.95$ (right) sample quantile statistics, respectively.}
  \label{fig:cal_plot}
\end{figure}

For the partial $p$-values, we calculate the partial posteriors on a grid of $\theta$ values between $0.5$ an $4$ (the support of the prior), estimating the likelihoods of the corresponding test statistics with kernel density estimators. These partial posteriors are displayed in Figure \ref{fig:partial_plot}. The kernel density estimates of the likelihoods for a given value of $\theta$ are computed from $2000000$ samples of the $0.95$ sample quantile and $800000$ samples of the $0.05$ sample quantile. These large sample sizes reflect the need to have low estimation error in the tails of our kernel density estimates to prevent explosive errors when inverting them to form the partial posteriors. As \ref{fig:partial_plot} shows, the estimation accuracy could still be improved further with greater sample sizes.

\begin{figure}[h]
  \centering
  \includegraphics[scale=0.5]{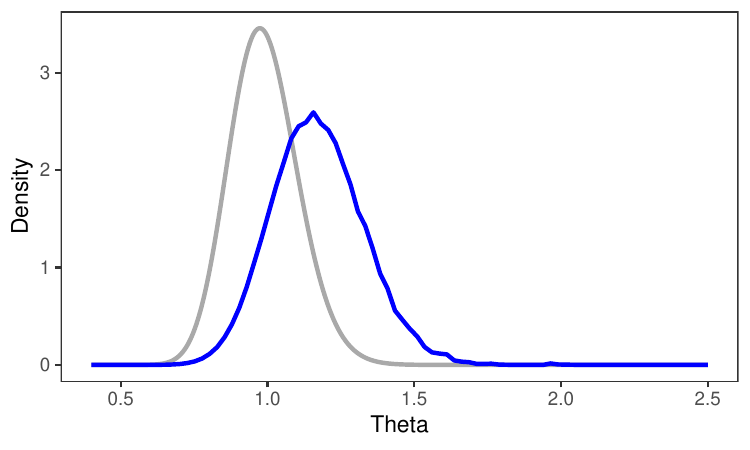}
  \includegraphics[scale=0.5]{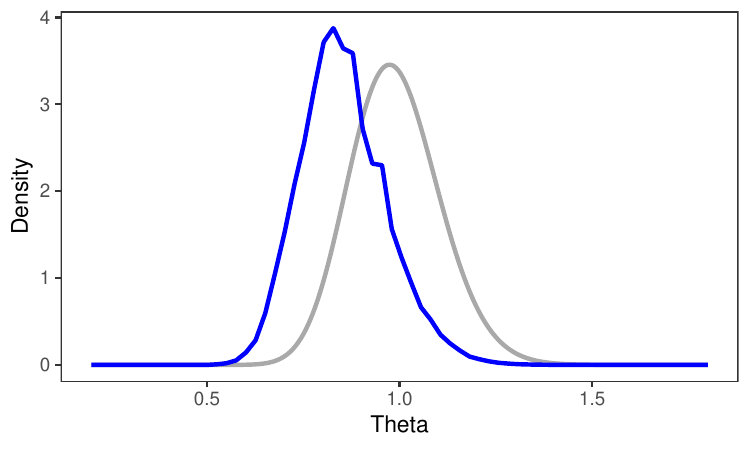}
  \caption{Partial posteriors (in blue) computed for the $0.05$ sample quantile (left) and $0.95$ sample quantile (right), plotted against the full-data posterior (in gray).}
  \label{fig:partial_plot}
\end{figure}

The sampled $p$-values are estimated using $1000$ draws from the sampling distribution $\mathsf{beta}(\theta,\theta)$ for each $\theta$ sampled from the posterior. The joint $p$-value bound is computed using $N_{\mathrm{prior}}=250$ samples $\theta_n$ from the prior distribution and $M_{\mathrm{sampling}}=50000$ samples from each corresponding sampling distribution $\mathsf{beta}(theta_n,\theta_n)$, out of which $L_{\mathrm{estimate}}=10000$ are randomly chosen for evaluation of the corresponding joint CDF (conditional on $\theta_n$). By taking $M_{\mathrm{sampling}}$ larger than $L_{\mathrm{estimate}}$, we are able to estimate all probabilities at higher resolution without needlessly multiplying the number of estimated probabilities for calculating the empirical CDF.

\end{document}